\def\ket #1{\vert #1\rangle}
\def\bra #1{\langle #1\vert}
\def\ketbra #1 #2 {\vert #1\rangle \langle #2 \vert}
\def\braket #1 #2 {\langle #1\vert #2 \rangle}
\def\tr #1 {\trace #1}
\newcommand{\1}{\mathbbm{1}}
\DeclareMathOperator{\trace}{tr}
\newcommand{\E}{\mathbb{E}}
\newcommand{\B}{B}
\newcommand{\Hil}{H}
\newcommand{\Wg}{Wg}
\newtheorem{theorem}{Theorem}[section]
\newtheorem{lemma}[theorem]{Lemma}
\newtheorem{proposition}[theorem]{Proposition}
\newenvironment{definition}[1][Definition]{\begin{trivlist}
\item[\hskip \labelsep {\bfseries #1}]}{\end{trivlist}}
\newenvironment{remark}[1][Remark]{\begin{trivlist}
\item[\hskip \labelsep {\bfseries #1}]}{\end{trivlist}}
\begin{document}
\bibliographystyle{alpha}

{
\begin{center}
\Large\textbf{Matrix Product States, Random Matrix Theory and the Principle of Maximum Entropy}
\vspace{1em}

\large
Ben\^oit Collins$^{1,2}$, Carlos E. Gonz\'alez-Guill\'en$^{3,5}$ and David P\'erez-Garc\'ia$^{4,5}$\\
\footnotesize
\vspace{.5em}
$^1$D\'epartement de Math\'ematique et Statistique, Universit\'e d'Ottawa, 
K1N6N5 Ottawa, Canada.\\
$^2$CNRS, Institut Camille Jordan Universit«e Lyon 1, 
69622 Villeurbanne, France.\\
$^3$Departamento de Matem\'aticas, E.T.S.I. Industriales, Universidad Polit\'ecnica de Madrid, 
28006 Madrid, Spain.\\
$^4$Departamento An\'alisis Matem\'atico, Universidad Complutense de Madrid, 28040 Madrid, Spain.\\
$^5$IMI, Universidad Complutense de Madrid,  28040 Madrid, Spain.
\end{center}
\vspace{1.5em}
\begin{quote}
\small
\textbf{Abstract}
Using random matrix techniques and the theory of Matrix Product States we show that reduced density matrices of quantum spin chains have generically maximum entropy.
\vspace{1.5em}\\
\end{quote}
}

{\it Jaynes' principle of maximum entropy} \cite{Jaynes57a,Jaynes57b} gives a pretty satisfactory solution to the old problem of dealing with prior information in probability theory. Generalizing  the old {\it  principle of indifference} of Laplace, it briefly states that among all possible probability distributions compatible with our prior information, the best choice is the one which maximizes the Shannon entropy. Apart from its important applications on decision theory, since its origin it has succeeded in giving a very useful information-theoretical view of statistical mechanics, both classical \cite{Jaynes57a} and quantum \cite{Jaynes57b}--where the function to maximize is the von-Neumann entropy. As an easy illustration, given the average energy of a quantum system as prior information, the density matrix which maximizes entropy is exactly the thermal state associated to that particular energy. This spirit has been recently recovered with great success in \cite{PSW06} and further developed in a number of ways in \cite{Brandao11, Cramer11, Popescu09, Popescu10}.

To which extent the principle of maximum entropy can be extended to more and more general situations has been a very active and controversary field in the last half-century. For instance, very recently a series of theoretical and experimental works \cite{Eisert08a,Eisert08b,Bloch} seem to validate the principle in relaxation processes of quantum systems when focusing on a particular small subsystem --which, as argued in Figure \ref{fig1}, is the most relevant situation.

Another, even older, principle to assign prior probabilities in physical problems relies on the symmetries of the problem (see \cite{Jaynes67} for a discussion). For instance, if one wants to incorporate in the problem some invariance, i.e. independence of the reference frame, this already reduces the class of prior probability distributions available. Indeed, if one has enough symmetries --they form a compact group-- there is indeed a unique way of defining a prior distribution compatible with the symmetries --the Haar measure-- and the problem is solved.

But what if one wants to incorporate to the problem some less {\it standard} knowledge? For instance, that the interactions in our model are local and homogeneous and that we work at zero temperature, but not any assumption on the particular interactions in the model itself. Note the difference with Jaynes' approach where the particular Hamiltonian of the model is known. Is there any way of incorporating this information to the problem? Which is then the right prior probability? Is it related to maximizing some entropy? Since this type of assumptions are natural and widely accepted, solving these questions could be of upmost importance in quantum condensed-matter problems. In this paper we attack (and to some extent solve) them in the particular case of 1D spin systems.

\begin{figure}[t]
\includegraphics[width=15cm]{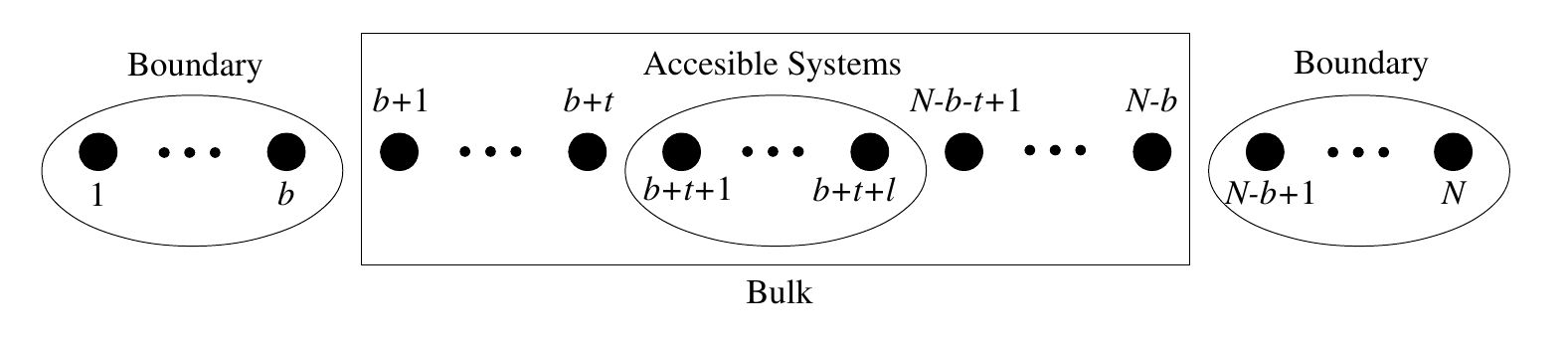}\\
\caption{In this work we consider a chain of $N$ sites, with homogeneous interactions in the bulk and boundary effects in exponentially small regions of size $b$ at the borders. We assume that the experimentally accesible region (and hence the region we are interested in) is an exponentially small region of size $l$ in the center of the chain.\label{fig1}}
\end{figure}

To do that we will take advantage of the recent developments in the understanding of quantum spin chains, where it is nowadays widely well justified, both numerically \cite{White92} and analytically \cite{Hastings07}, that their ground states are exactly represented by the set of Matrix Product States (MPS) with {\it polynomial} bond dimension. We will concentrate in the situation of a chain with boundary effects in exponentially small regions of size $b$ at both ends, homogeneity in the bulk and experimental access to an exponentially small central region of size $l$ (see Figure \ref{fig1}). Tracing out the boundary terms leads to a bulk state given by \begin{equation}\label{eq:inicio}
\rho=\sum_{i_{b+1},...i_{N-b},j_{b+1},...j_{N-b}=1}^d \tr (L A_{i_{b+1}} \cdots A_{i_{N-b}} R  A^\dagger_{j_{N-b}} \cdots A^\dagger_{j_{b+1}} )\ket {i_{b+1}...i_{N-b}} \bra {j_{b+1}...j_{N-b}},
\end{equation}
where all $A_i$, $L\ge 0$ and $R\ge 0$ are $D\times D$ matrices with $D={\rm poly}(N)$. This will be our starting point, that is, the prior information can be
understood as restricting the bulk-states of our system as having the form (\ref{eq:inicio}).

Now, it is also known from the general theory of MPS \cite{PVWC06} that this set has a natural (over)para\-metrization by the group $U(dD)$, via the map $U\mapsto A_i=\bra 0 U \ket i$. Being $U(dD)$ a unitary group, one can use the symmetry-based assignment of prior distributions to sample from the Haar measure. Similarly, the fact that the map $X\mapsto \sum_i A_iXA_i^\dagger$ is trace-preserving leads to consider $\tr(R)=1$, $\|L\|_\infty\le 1$, giving us natural ways of sampling also the boundary conditions (see below).
One can therefore ask about which is then the generic reduced density matrix $\rho_l$ of $l\ll N$ sites. Note that, by the above comments, this is nothing but asking about generic observations of 1D quantum systems. This idea has been already exploited for the non-translational invariant case in \cite{GOZ10}. The aim of the present work is to show that $\rho_l$ has generically maximum entropy:

\begin{theorem} \label{thetheorem}
Let $\rho_l$ be taken at random from the ensemble introduced with $D\geq N^{5}$. Then
$\| \rho_l/\trace \rho_l-\1/d^l\|_{\infty} \leq (d^l-1)\sqrt{d^l}O(D^{-1/10})$  except with probability exponentially small in D.
\end{theorem}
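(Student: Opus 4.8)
The plan is to first replace the physical boundary data $(L,R)$ by \emph{renormalized} boundary operators obtained by pushing $L$ and $R$ through the many sites that get traced out, and then to analyze the remaining ``bare'' object with Weingarten calculus and with concentration of measure on $U(dD)$. Concretely, I would introduce the random transfer channel $\mathcal{E}(X)=\sum_{i=1}^{d}A_iXA_i^{\dagger}$ and its adjoint $\mathcal{E}^{*}(X)=\sum_{i=1}^{d}A_i^{\dagger}XA_i$, and rewrite (\ref{eq:inicio}), after tracing out the $m\approx N/2$ bulk sites on each side of the central window, as
\begin{equation*}
\rho_l=\sum_{\vec i,\vec j}\trace\!\left(\hat L\,A_{i_1}\cdots A_{i_l}\,\hat R\,A_{j_l}^{\dagger}\cdots A_{j_1}^{\dagger}\right)\ket{\vec i}\bra{\vec j},\qquad \hat L=\mathcal{E}^{*\,m}(L),\quad \hat R=\mathcal{E}^{\,m}(R).
\end{equation*}
Complete positivity gives $\hat L,\hat R\ge0$, and $\trace\hat L=\trace(L\,\mathcal{E}^{\,m}(\1))$, $\trace\hat R=\trace(R\,\mathcal{E}^{*\,m}(\1))$ are close to $\trace L$ and $\trace R=1$ once $\mathcal{E}$ is known to be approximately unital. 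In this language the theorem becomes the statement that $\rho_l$ is, up to its own trace, close to $D^{-2}\,\trace(\hat L)\,\trace(\hat R)$ times the bare matrix $\Sigma:=\sum_{\vec i,\vec j}\trace(A_{i_1}\cdots A_{i_l}A_{j_l}^{\dagger}\cdots A_{j_1}^{\dagger})\ket{\vec i}\bra{\vec j}$, together with the statement that $\Sigma$ is proportional to $\1_{d^l}$ up to a small error.

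The crucial random-matrix input is that the transfer channel is, with overwhelming probability, close to the completely depolarizing (replacer) channel $\mathcal{R}\colon X\mapsto D^{-1}\trace(X)\,\1_D$. A short Weingarten computation using $\E[U_{ab}\bar U_{cd}]=(dD)^{-1}\delta_{ac}\delta_{bd}$ gives $\E[\mathcal{E}]=\mathcal{R}=\E[\mathcal{E}^{*}]$ exactly; a second-moment estimate (again via Weingarten, or via operator-norm concentration of the truncated Haar isometry $[A_1;\dots;A_d]$) shows that on the traceless subspace $\|\mathcal{E}\|$ and $\|\mathcal{E}^{*}\|$ are bounded away from $1$ --- i.e.\ $\mathcal{E}$ has a spectral gap --- except on an event of exponentially small probability. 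Iterating this $m=\Theta(N)$ times and tracking operator norms, I would conclude that $\|\hat L-D^{-1}\trace(\hat L)\,\1_D\|_{\infty}$ and $\|\hat R-D^{-1}\trace(\hat R)\,\1_D\|_{\infty}$ are at most $\mathrm{poly}(N)\cdot O(D^{-1/2})$ with overwhelming probability; here the hypothesis $D\ge N^{5}$ supplies the room needed so that the errors accumulated along the $m$ iterations do not spoil the estimate.

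It then remains to analyze $\Sigma$. Its entries are monomials of degree $l$ in $U$ and $l$ in $\bar U$, so $\E[\Sigma]$ is a sum over pairs of permutations in $S_l$ weighted by Weingarten functions; the dominant (identity) pairing produces a term proportional to $\sum_{\vec i}\ket{\vec i}\bra{\vec i}=\1_{d^l}$ --- exactly the maximally mixed state after normalization --- while every other pairing carries at least one extra factor $D^{-1}$. To pass from $\E[\Sigma]$ to $\Sigma$, and more globally to control $\|\rho_l/\trace\rho_l-\1/d^l\|_{\infty}$, I would invoke concentration of measure on $U(dD)$ (L\'evy's lemma, log-Sobolev constant of order $(dD)^{-1}$): the relevant functions of $U$ are Lipschitz with a constant governed by $\|A_i\|_{\infty}\le\|U\|=1$, by the operator-norm bounds on $\hat L,\hat R$ from the previous step, and by a lower bound on $\trace\rho_l$ (which concentrates around $D^{-1}\trace L$). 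Combining the expectation, the deviation bound, and a union bound over the $d^{2l}=O(1)$ matrix entries yields the claimed inequality, the exponent $-1/10$ coming from balancing the $\mathrm{poly}(N)\le\mathrm{poly}(D^{1/5})$ losses against the $D^{-1/2}$ concentration scale.

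The main obstacle is the middle step: extracting a spectral-gap / near-replacer estimate for the random transfer channel $\mathcal{E}$ that is quantitatively strong and \emph{uniform} enough to survive $\Theta(N)$ compositions, and then threading the resulting operator-norm bounds (on $\hat L$, $\hat R$, on the Lipschitz constants, and on $\trace\rho_l$) through the concentration argument while keeping the failure probability exponentially small. By comparison, the Weingarten bookkeeping for $\Sigma$ and the final assembly are routine once these estimates are in hand.
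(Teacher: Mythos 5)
Your route is genuinely different from the paper's, but it has a real gap at exactly the step you flag as the crux. The assertion that the random transfer channel $\mathcal{E}$ is close to the replacer channel ``except with exponentially small probability'', obtained from ``a second-moment estimate'', is not a routine Weingarten computation. The first moment indeed gives $\E[\mathcal{E}(X)]=\trace(X)\,\1/D$ exactly, but that says nothing about iterating a \emph{single} realization $\Theta(N)$ times: for that you need a high-probability bound on the operator norm of the map restricted to the traceless subspace, uniform over inputs. A second-moment (Hilbert--Schmidt) bound on that restriction is of order $D^2/d$ and is useless for the operator norm; to get a constant gap you need high moments ($k\sim\log D$) or a net argument combined with Lipschitz concentration --- essentially quantum-expander technology --- and you must additionally handle the facts that the fixed point of the trace-preserving direction is only approximately $\1/D$, that the errors must be propagated through $m=\Theta(N)$ compositions, and that $\hat L$, $\hat R$ and the central block $\Sigma$ are all built from the \emph{same} Haar unitary $U$, so the ``bare'' Weingarten analysis of $\Sigma$ cannot be decoupled from the conditioning event without further argument. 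None of this is supplied, and your final accounting does not actually produce the claimed bound: the factor $(d^l-1)\sqrt{d^l}$ and the exponent $-1/10$ are asserted, not derived.

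For comparison, the paper avoids the channel-iteration picture entirely. It computes $\E(\trace\rho_l^2)$ in one global Weingarten calculation over all $2n$ copies of $U$ along the chain (Appendix A), powered by a new asymptotic estimate $\Wg(n,\sigma)\leq K\,n^{-p-|\sigma|(1-2/k)}$ and the combinatorial Lemmas \ref{gamma} and \ref{averages} for the boundary averages; it shows $\E(\trace\rho_l)=1/2$, bounds the Lipschitz constants of the two scalar functions $\trace\rho_l^2$ and $(\trace\rho_l)^2$ by $4n+10$, and applies measure concentration plus a union bound to get $\trace\rho_{Norm}^2=1/d^l+O(D^{-1/5})$ with exponentially small failure probability. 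The sup-norm statement then follows from a purely deterministic extremal argument: among trace-one spectra with a given purity excess, the eigenvalue farthest from $1/d^l$ is at distance $(d^l-1)\sqrt{d^l}\,O(D^{-1/10})$, which is exactly where the shape of the bound (and the square-root loss in the exponent) comes from. This reduction to a single scalar (the purity) plus concentration is the structural idea your proposal is missing; to salvage your approach you would have to prove the uniform spectral-gap estimate with exponentially small failure probability and control the correlations noted above, which is a substantial project rather than a routine step.
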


Note that, since the accessible region $l$ is exponentially smaller than the system size, the bound can be made arbitrary small while keeping the size of the matrices $D$ polynomial in the system size.

To prove the theorem, we will rely on recent developments of random matrix theory, in particular on the graphical Weingarten calculus provided in \cite{CN10}, and on a novel estimate of the Weingarten function.

The paper is organized as follows. First we introduce the Matrix Product State formalism, then we introduce Weingarten function and calculus, then we introduce basic results of the concentration of measure phenomenon. Finally, in section \ref{sec4}, we prove theorem \ref{thetheorem} using the tools already introduced together with a novel asymptotic bound of the Weingarten function.

\begin{section}{Random Matrix Product States}\label{sec:MPS}
In this section we just fix the notation, for a detailed exposition see \cite{PVWC06}. Let $\dim (H_A) =D$ and $\dim (H_B)=d$, our initial state $\rho$ given by Equation (\ref{eq:inicio}) can be expressed by means of the map $\mathcal E(X): \B (\Hil_A) \rightarrow \B (\Hil_A\otimes \Hil_B)$ given by $\mathcal E(X)=\sum \limits_{i,j} A_i X A_j^\dagger \otimes \ket i\bra j$, simply as
$$ \rho = tr_A[L \mathcal E^{n}(R)], $$\\
where one should understand the map acting only in A and {\it creating} the systems B in order from 1 to n.

In the rest of this work we will be interested in the reduced state of the $l$ consecutive central sites of the chain, where $l<<n=2t+l$, that, up to normalization, will be described as
$$\rho_l = tr_{A,B_1...B_{t}, B_{t+l+1}...B_n} [L \mathcal E^{n}(R)]. $$
The general boundary conditions $L$ and $R$ come from tracing out the boundary sites as described in figure $\ref{fig1}$. MPS theory leads to consider them belonging respectively to the sets $\mathcal L=\{L\geq 0: \|L\|_\infty \leq 1, L \in M_{D}\}$ and $\mathcal R=\{R\geq 0: \tr(R)= 1, R \in M_{D}\}$, where $\|\cdot\|_\infty$ means the usual operator norm. Diagonalizing $L=V \Lambda V^\dagger$ and $ R=W \Omega W^\dagger $ we can parametrize $\mathcal{L}$ by $[0,1]^D \times U(D)$ and $\mathcal{R}$ by $\mathbbm S_1([0,1]^D)\times U(D)$, being $\mathbb S_1([0,1]^D)$ the set of $D$-event probability distributions. Again by symmetry considerations, this leads to sample $L$ using the Lebesgue measure on $[0,1]^D$ and the Haar measure on $U(D)$ and to sample $R$ using any permutational invariant measure on  $\mathbb S_1([0,1]^D)$ and the Haar measure on  $U(D)$. We finally recall from the introduction that the matrices $A_i$ in Equation (\ref{eq:inicio}) will be sampled from the Haar measure on $U(Dd)$ via the parametrization $U\mapsto A_i=\bra 0 U \ket i$.

Summarizing, we take the ensemble of MPS defined by the tuple $(U,L,R)=(U,V,W,\Lambda,\Omega)$ where $U$, $V$ and $W$ are distributed with respect to the Haar measure in the respective unitary group, $\Lambda$ is distributed according to the Lebesgue measure in $[0,1]^D$ and $\Omega$ according to any permutation invariant probability measure in $\mathbbm S_1 ( [0,1]^D)$.

Now our problem can be rephrased as:

\

{\it Given  $(U,L,R)$ randomly chosen find the behavior of the normalized state corresponding to
$$\rho_l (U,L,R)  = \trace_{A,B_1...B_t, B_{t+l+1}...B_n} (L U_{A,B_1}\cdot\cdot\cdot U_{A,B_n} (R \otimes (\ket 0 \bra 0)^{\otimes n}) U_{A,B_n}^{ \dagger}\cdot\cdot\cdot U_{A,B_1}^{\dagger}).$$
where the systems of the sites $B_i$ are labeled from left to right and the unitary matrices are acting in the site indicated and the ancillary system A from right to left living the other sites invariant.}

\

This state is represented in the graphical level in figure \ref{figint}.

\begin{figure}[t]
\includegraphics[width=15.5cm]{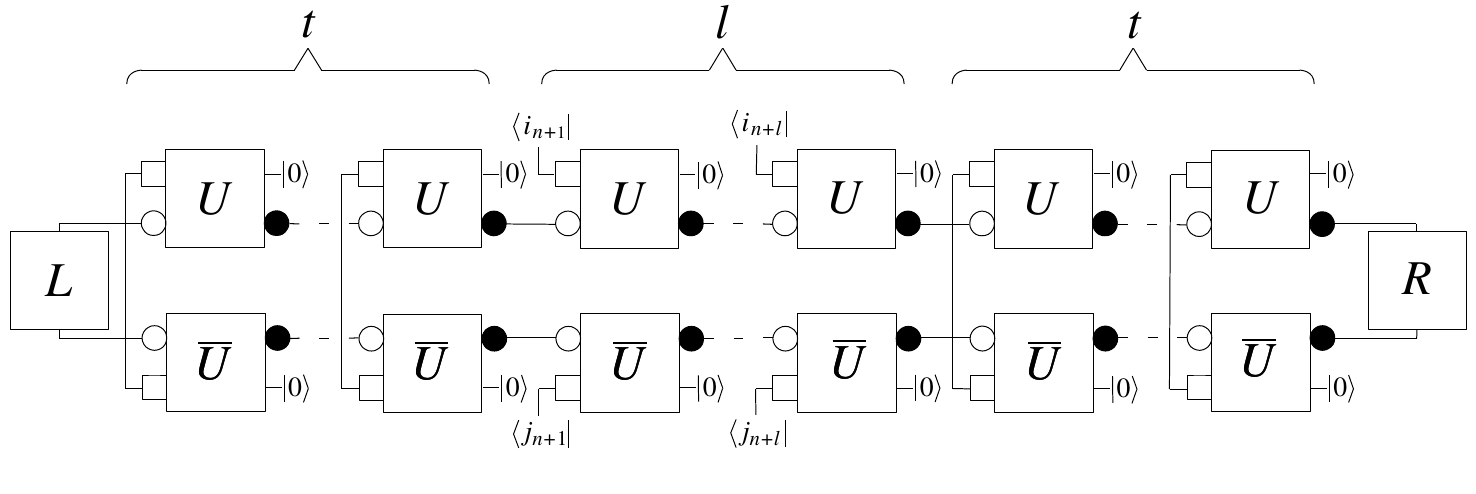}\\
\caption{Graphical representation of $\rho_l$, where big squares represent matrices and the small objects attached to them represent the tensors that form the matrices. Dark objects correspond to "ket" tensors, white objects correspond to "bra" tensors, squares are used for dimension d and circles for dimension D. Wires represent contraction rules between tensors.}\label{figint}
\end{figure}

\end{section}

\begin{section}{Weingarten Function and Calculus} \label{Weing}
The Weingarten function was first introduce in \cite{Weingarten78}, for a complete description of this function we refer to \cite{Collins03}. Here we just describe its main ingredients to focus on the graphical calculus introduced in \cite{CN10}. We will follow the standard notation of representation theory of symmetric groups.  We denote by ${\lambda \vdash p}$ that $\lambda$ is a partition of $p$, $\chi^\lambda$ is the corresponding character of $S_p$ and $s_{\lambda,d}(x)=s_{\lambda,d}(x,...,x)$ the corresponding Schur function, see \cite{Fulton97}. If $\sigma\in S_p$ we denote by $|\sigma|$ the minimum number k such that $\sigma$ can be written as a product of $k$ transpositions, $\#\sigma$ is the number of cycles in $\sigma$ and both quantities are related by the formula $|\sigma|=p-\#\sigma$.
\begin{definition}\label{defWein}
The Weingarten function $\Wg(n,\sigma)$ takes as inputs a dimension parameter $n$ and a permutation $\sigma$ in the symmetric group $S_p$ and is given by
$$\Wg(n,\sigma)=\frac {1} {p!^2} \sum_{\lambda \vdash p} \frac  {(\chi^\lambda (1))^2 \chi^\lambda (\sigma)} {s_{\lambda,n}(1)}.$$
\end{definition}

Its importance relys on the following theorem from \cite{Collins03}, which tells us that the average of a monomial over the unitary group can be computed in terms of sums of Weingarten functions.

\begin{theorem}
Let n be a positive integer and $i=(i_1,...,i_p)$, $i'=(i'_1,...,i'_p)$, $j=(j_1,...,j_p)$ and $j'=(j'_1,...,j'_p)$ be p-tuples of positive integers from ${1,2,...,n}$. Then

$$\int_{\mathcal U(n)} U_{i_1j_1}... U_{i_pj_p} \overline {U_{i'_1j'_1}}...\overline{U_{i'_pj'_p}} dU=\sum_{\sigma,\tau \in S_p} \delta_{i_1i'_{\sigma(1)}}... \delta_{i_pi'_{\sigma(p)}}\delta_{j_1j'_{\tau(1)}} ... \delta_{j_pj'_{\tau(p)}} \Wg(n,\tau\sigma^{-1}).$$

\end{theorem}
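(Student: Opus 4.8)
The plan is to package all the scalar integrals into a single operator identity and then read off entries. First I would introduce, on $V^{\otimes p}\otimes\overline V^{\otimes p}$ with $V=\mathbb{C}^n$, the operator
$$T=\int_{\mathcal U(n)} U^{\otimes p}\otimes\overline U^{\otimes p}\,dU,$$
whose $\big((i,i'),(j,j')\big)$ matrix entry is precisely the left-hand side of the theorem. Since $g\mapsto g^{\otimes p}\otimes\overline g^{\otimes p}$ is a unitary representation of the compact group $\mathcal U(n)$ and Haar measure is left-right invariant, $T$ is idempotent and self-adjoint, hence the orthogonal projection onto the subspace of invariant vectors. Under the standard identification $V^{\otimes p}\otimes\overline V^{\otimes p}\cong\B(V^{\otimes p})$ this representation becomes conjugation $X\mapsto U^{\otimes p}X(U^{\otimes p})^\dagger$, so the invariant subspace is exactly the commutant of $\{U^{\otimes p}:U\in\mathcal U(n)\}$.

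Next I would invoke Schur--Weyl duality: this commutant is spanned by the permutation operators $M_\sigma$, $\sigma\in S_p$, acting by $M_\sigma(v_1\otimes\cdots\otimes v_p)=v_{\sigma^{-1}(1)}\otimes\cdots\otimes v_{\sigma^{-1}(p)}$, which form a basis once $n\ge p$. The orthogonal projection onto the span of a (generally non-orthonormal) family is $T=\sum_{\sigma,\tau}(G^{-1})_{\sigma\tau}\,\ket{M_\sigma}\bra{M_\tau}$ in Hilbert--Schmidt bra-ket notation, where $G_{\sigma\tau}=\trace(M_\sigma^\dagger M_\tau)=n^{\#(\sigma^{-1}\tau)}$ is the Gram matrix. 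Reading off entries---using $(M_\sigma)_{i,i'}=\prod_k\delta_{i_k,i'_{\sigma^{-1}(k)}}$ together with $\trace(M_\tau^\dagger X)=\sum_{j,j'}(M_\tau)_{j,j'}X_{j,j'}$---then produces exactly a product of Kronecker deltas, with $(G^{-1})_{\sigma\tau}$ sitting in place of the Weingarten factor; relabelling $\sigma,\tau$ by their inverses yields the pattern $\prod_k\delta_{i_k,i'_{\sigma(k)}}\prod_k\delta_{j_k,j'_{\tau(k)}}$ of the statement.

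It therefore remains to identify $(G^{-1})_{\sigma\tau}$ with $\Wg(n,\tau\sigma^{-1})$ of Definition \ref{defWein}, which is the crux. I would observe that $G$ depends only on $\sigma^{-1}\tau$, so it is convolution by the class function $\pi\mapsto n^{\#\pi}$ on the group algebra $\mathbb{C}[S_p]$ and is diagonalized by the irreducible characters. Taking the trace of $M_\sigma$ through the Schur--Weyl decomposition $V^{\otimes p}\cong\bigoplus_\lambda W_\lambda\otimes S_\lambda$, and using $\dim W_\lambda=s_{\lambda,n}(1)$, gives the expansion $n^{\#\sigma}=\sum_{\lambda\vdash p}s_{\lambda,n}(1)\,\chi^\lambda(\sigma)$; character orthogonality then shows convolution by $n^{\#\,\cdot}$ acts on the isotypic block $\lambda$ as the scalar $\omega_\lambda=p!\,s_{\lambda,n}(1)/\chi^\lambda(1)$. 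For $n\ge p$ each $s_{\lambda,n}(1)>0$, so $G$ is invertible and its inverse acts as $\omega_\lambda^{-1}$ on block $\lambda$. Expanding this inverse against the central idempotents $\tfrac{\chi^\lambda(1)}{p!}\sum_\pi\chi^\lambda(\pi)\pi$ (real characters) exhibits $G^{-1}$ as convolution by $\tfrac1{p!^2}\sum_\lambda\tfrac{(\chi^\lambda(1))^2\chi^\lambda(\pi)}{s_{\lambda,n}(1)}$, which is exactly $\Wg(n,\pi)$; since $\Wg$ is a class function and $\sigma^{-1}\tau$ is conjugate to $\tau\sigma^{-1}$, this gives $(G^{-1})_{\sigma\tau}=\Wg(n,\tau\sigma^{-1})$ and closes the argument.

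I expect the main obstacle to be precisely this last identification---turning the abstract inverse Gram matrix into the explicit character/Schur-function formula. It rests on the Schur--Weyl trace identity $n^{\#\sigma}=\sum_\lambda s_{\lambda,n}(1)\chi^\lambda(\sigma)$ combined with Fourier inversion on $S_p$, and one must be slightly careful in the regime $n<p$, where $G$ is invertible only on its range and $\Wg$ plays the role of the Moore--Penrose pseudo-inverse; in the intended application $n\ge p$, so genuine invertibility holds and the computation above applies verbatim.
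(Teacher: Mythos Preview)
The paper does not actually prove this statement: it is quoted verbatim as a known result from \cite{Collins03}, with no argument given. Your proposal is therefore not being compared against a proof in the paper, but it is correct and is essentially the standard argument behind the cited reference. The key steps---interpreting the Haar integral as the orthogonal projection onto $\mathcal U(n)$-invariants in $V^{\otimes p}\otimes\overline V^{\otimes p}$, identifying that invariant space with the commutant via Schur--Weyl duality, expressing the projection through the Gram matrix $G_{\sigma\tau}=n^{\#(\sigma^{-1}\tau)}$ of the permutation operators, and then inverting $G$ by Fourier analysis on $S_p$ using the trace identity $n^{\#\sigma}=\sum_{\lambda\vdash p}s_{\lambda,n}(1)\chi^\lambda(\sigma)$---are exactly the ingredients of the Collins and Collins--\'Sniady derivations. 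Your remark about the regime $n<p$ (pseudo-inverse versus genuine inverse) is also accurate and matches how the literature treats that case.
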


In \cite{CN10} the authors introduce a graphical paradigm in order to simplify the computation of the average of polynomials over the unitary group. Consider a polynomial $P(U)$ of degree $p$ in $U$ and $\overline{ U}$ then 
\begin{equation}\label{Graph}
\mathbbm E_U(P(U))=\sum_{\sigma,\tau \in S_p } C_{(\sigma,\tau)} \Wg(n,\tau\sigma^{-1}),
\end{equation}
where the coefficients $C_{(\sigma, \tau)}$ can be computed by the following procedure in figures \ref{figint}, \ref{figWeing1} and \ref{figWein2}. One has to enumerate the matrices $U$ and $\overline U$ respectively from $1$ to $p$ and for any two permutations $\sigma, \tau \in S_p$ we delete the $U$ and $\overline U$ boxes and we connect the white square and circle in $U_i$ with the white square and circle respectively in $\overline {U_{\sigma(i)}}$, and analogously with the black objects and the $\tau$ permutation. Now, loops represent traces over the matrices involved in it. If there is no matrix involved in a loop then it represents the trace of the identity of the system. Finally, if there are paths that are not loops they translate into the contraction with the boundary conditions that appear in it.
The number $C_{(\sigma, \tau)}$ is just the product of the values of all the contractions. Note that, as drawn in figure \ref{figint}, a monomial in $U^\dagger$ can be substituted by a monomial in $\overline U$.

\end{section}

\begin{section}{Measure of Concentration Phenomenon}

In this section we introduce the basic results of the measure concentration phenomenon that we are going to use; for a detailed exposition see for example \cite{MS86, Ledoux01}.

\begin{definition}
Let $(X,d)$ be a metric space with probability measure $\mu$, its concentration function  $\alpha_{(X,d,\mu)}$ is defined as
$$\alpha_{(X,d,\mu)}(r)=\sup \{1-\mu(A_r);A\subset X,\mu(A)\geq \frac 1 2\} \text{, $r>0$}$$ where $A_r=\{x\in X;d(x,A)<r\}$ is the open r-neighbourhood of A (with respect to d).
\end{definition}

This definition allows to prove directly that almost all the images of a Lipschitz function concentrate around the median, where the concentration factor is given by the concentration function. Nevertheless, we are interested in the concentration around the mean which is a consequence of the other, as one can bound the distance between the median and the mean depending on the concentration function.

\begin{theorem}[Measure concentration phenomenon] \label{phenomenon}
Let $F$ be a Lipschitz function on $(X,d)$, and $\mu$ a probability measure on $(X,d)$, then
$$\mu(\{F>\mathbbm E_\mu(F)+r\})\leq 2\alpha_\mu(r/\|F\|_{Lip}),$$
$$\mu(\{F<\mathbbm E_\mu(F)-r\})\leq 2\alpha_\mu(r/\|F\|_{Lip}),$$
where $\mathbbm E_\mu(F)$ is the mean of $F$ with respect to $\mu$ and $\|F\|_{Lip}$ is the Lipschitz constant of $F$.
\end{theorem}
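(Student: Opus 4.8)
The plan is to follow the two-step route indicated in the remark just before the statement: first establish concentration about a median $m_F$ of $F$ directly from the definition of the concentration function, and then transfer this to concentration about the mean $\E_\mu(F)$ by controlling the gap $|\E_\mu(F)-m_F|$.

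For the first step, fix a median $m_F$, so that $\mu(\{F\le m_F\})\ge \tfrac12$ and $\mu(\{F\ge m_F\})\ge\tfrac12$. Put $A=\{F\le m_F\}$; since $F$ is $\|F\|_{Lip}$-Lipschitz, any $x$ with $d(x,A)<r$ satisfies $F(x)<m_F+\|F\|_{Lip}\,r$, whence $A_r\subseteq\{F<m_F+\|F\|_{Lip}\,r\}$. Taking complements and using $\mu(A)\ge\tfrac12$ together with the definition of $\alpha_\mu$ gives $\mu(\{F\ge m_F+\|F\|_{Lip}\,r\})\le 1-\mu(A_r)\le\alpha_\mu(r)$; rescaling $r\mapsto r/\|F\|_{Lip}$ yields $\mu(\{F\ge m_F+r\})\le\alpha_\mu(r/\|F\|_{Lip})$. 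The symmetric choice $A=\{F\ge m_F\}$ gives the matching lower tail. This step is essentially a direct unwinding of the definitions and should present no difficulty.

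For the second step I would bound the mean--median distance by
\[
|\E_\mu(F)-m_F|\le \E_\mu(|F-m_F|)=\int_0^\infty \mu(\{|F-m_F|>t\})\,dt\le 2\int_0^\infty \alpha_\mu(t/\|F\|_{Lip})\,dt,
\]
where the last inequality uses the two one-sided median estimates from the first step. For the concentration functions relevant in the applications (of Gaussian type on the unitary groups and the relevant product spaces) this integral converges and is small, so $m_F$ and $\E_\mu(F)$ are close. Combining the two steps, I would use the inclusion $\{F>\E_\mu(F)+r\}\subseteq\{F> m_F+(r-|\E_\mu(F)-m_F|)\}$ and apply the median bound, absorbing the shift $|\E_\mu(F)-m_F|$ into the prefactor $2$; the symmetric inclusion handles the lower tail.

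The main obstacle is precisely this last absorption. The median estimate carries the clean argument $r/\|F\|_{Lip}$ with no extra constant, so passing to the mean while keeping the argument exactly $r/\|F\|_{Lip}$ forces the mean--median shift to be paid for by the factor $2$ in front of $\alpha_\mu$; making this quantitative amounts to verifying the elementary inequality $\alpha_\mu((r-|\E_\mu(F)-m_F|)/\|F\|_{Lip})\le 2\alpha_\mu(r/\|F\|_{Lip})$ in the regime of interest, which requires that $\alpha_\mu$ decay fast enough that the shift is dominated by the concentration scale. This is the only genuinely quantitative point; the remainder is bookkeeping, which is why the statement is standard and can otherwise be quoted from \cite{MS86,Ledoux01}.
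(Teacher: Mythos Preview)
Your two-step route---concentration about a median directly from the definition of $\alpha_\mu$, followed by a mean--median comparison via $|\E_\mu(F)-m_F|\le\int_0^\infty\mu(|F-m_F|>t)\,dt$---is exactly the approach the paper indicates in the sentence preceding the statement; the paper itself gives no further proof and simply treats the result as standard, deferring to \cite{MS86,Ledoux01}. You have also correctly isolated the one genuinely quantitative point (absorbing the mean--median shift into the prefactor $2$), which is indeed where the precise constants in the stated form are fixed in those references.
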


When one is interested in the concentration properties of a family of spaces, what matters is the scaling of the concentration function depending on the parameter defining the family of spaces. Thus looking at the definition of concentration function one can prove that the concentration properties of two spaces behave at least as well as the worst one of the two.
\begin{proposition}
Let  $\mu$, $\nu$ two probability measures on metric spaces $(X,d)$ and $(Y,\delta)$ respectively. Then, if $\mu \times \nu$ is the product measure in $X \times Y$ equipped with the $l^1$-metric, $\alpha_{(X\times Y,d+\delta,\mu\times \nu)}\leq \alpha_{(X,d,\mu)}+ \alpha_{(Y,\delta,\nu)}$
\end{proposition}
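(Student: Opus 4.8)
The plan is to pass to the complementary formulation. Writing $P:=\mu\times\nu$ and letting $A_r$ denote the open $r$-neighbourhood for the $\ell^1$ product metric $d+\delta$, it suffices to show that every measurable $A\subseteq X\times Y$ with $P(A)\geq\tfrac12$ satisfies $P(A_r)\geq 1-\alpha_{(X,d,\mu)}(r)-\alpha_{(Y,\delta,\nu)}(r)$, since taking the supremum over such $A$ then gives the claimed inequality of concentration functions. If either $\alpha_{(X,d,\mu)}(r)\geq\tfrac12$ or $\alpha_{(Y,\delta,\nu)}(r)\geq\tfrac12$ the right-hand side is $\leq\tfrac12\leq P(A)\leq P(A_r)$ and there is nothing to prove, so one may assume both are $<\tfrac12$.

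The mechanism I would use is a section-by-section application of the definition of the concentration function, and the point where the $\ell^1$ choice of metric matters is that it makes enlargements of sections add up: for $A^x:=\{y:(x,y)\in A\}\subseteq Y$ and $A_y:=\{x:(x,y)\in A\}\subseteq X$, if $d(x,x_0)<s_1$, $y_0\in A^{x_0}$ and $\delta(y,y_0)<s_2$ with $s_1+s_2\leq r$, then $(x,y)\in A_r$; hence $(A_r)^x\supseteq\bigl(\bigcup_{d(x,x_0)<s_1}A^{x_0}\bigr)_{s_2}$ for every admissible split. The quantitative input is then: whenever $\nu(A^x)\geq\tfrac12$ the definition of $\alpha_{(Y,\delta,\nu)}$ forces $\nu\bigl((A^x)_r\bigr)\geq 1-\alpha_{(Y,\delta,\nu)}(r)$, and dually whenever $\mu(A_y)\geq\tfrac12$ one gets $\mu\bigl((A_y)_r\bigr)\geq 1-\alpha_{(X,d,\mu)}(r)$; equivalently, for such $y$ the far-complement section $\bigl(X\times Y\setminus A_r\bigr)^y$ has $\mu$-measure at most $\alpha_{(X,d,\mu)}(r)$.

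Concretely I would set $B:=\{x\in X:\nu(A^x)\geq\tfrac12\}$. On $B$ one has $\nu((A_r)^x)\geq\nu((A^x)_r)\geq 1-\alpha_{(Y,\delta,\nu)}(r)$, so Fubini gives a contribution $\geq\mu(B)\bigl(1-\alpha_{(Y,\delta,\nu)}(r)\bigr)$ to $P(A_r)$, together with a further contribution from enlarging $B$ itself in the $X$-direction (which, via the split above, upgrades $\mu(B)$ to $\mu(B_{s_1})$ at the cost of shrinking the $Y$-budget to $r-s_1$, and $\mu(B_{s_1})\geq 1-\alpha_{(X,d,\mu)}(s_1)$ provided $\mu(B)\geq\tfrac12$). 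On $B^c$, where the $X$-sections are thin, I would instead slice in the $Y$-direction and use the dual estimate for the "good" set of $y$'s, carefully accounting for the overlap $B^c\times(\text{good }y)$ versus $B^c\times(\text{bad }y)$. Collecting both bookkeepings and using monotonicity of $r\mapsto\alpha(r)$ should produce the additive bound.

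The step I expect to be the main obstacle is precisely this last assembly: the naive good set $B$ need not satisfy $\mu(B)\geq\tfrac12$, so one cannot invoke $\alpha_{(X,d,\mu)}$ on it directly, and the residual mass sitting where \emph{both} families of sections are thin must be routed through the concentration function of whichever factor still controls it — which is what requires a case distinction (according to whether $B$, or its $Y$-analogue, carries at least half the mass) or a simultaneous use of both section decompositions on $B^c$. Once that dichotomy is organised, the remaining computations are routine Fubini estimates.
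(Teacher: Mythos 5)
Your strategy never actually closes, and in fact it cannot: the inequality you are asked to prove is false as written (additivity of the concentration functions at the \emph{same} radius $r$), so the final ``assembly'' step that you defer as routine is not an organisational difficulty but a genuine obstruction. Concretely, take $X=\{0,1\}^2$ with the Hamming ($\ell^1$) metric and uniform measure, and $Y=\{0,1\}$ with the discrete metric and uniform measure, and fix any $1<r\le 2$. Every $A\subseteq X$ with $\mu(A)\ge\tfrac12$ contains two distinct points, and every point of $X$ is within Hamming distance $1<r$ of one of them (only the antipode of a point is at distance $2$), so $\alpha_{(X,d,\mu)}(r)=0$; likewise $\alpha_{(Y,\delta,\nu)}(r)=0$ since $Y$ has diameter $1$. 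But $X\times Y$ with the $\ell^1$ product metric is the Hamming cube $\{0,1\}^3$, and the majority set $A=\{x:\ x_1+x_2+x_3\ge 2\}$ has $(\mu\times\nu)(A)=\tfrac12$ while $(0,0,0)$ is at distance $2\ge r$ from $A$, so $1-(\mu\times\nu)(A_r)\ge\tfrac18$. Hence $\alpha_{(X\times Y,d+\delta,\mu\times\nu)}(r)\ge\tfrac18>0=\alpha_{(X,d,\mu)}(r)+\alpha_{(Y,\delta,\nu)}(r)$. Your own mechanism already signals this: enlarging by $s_1$ in the $X$-direction leaves only a budget $s_2\le r-s_1$ in the $Y$-direction, so any bookkeeping of the kind you describe can only yield split-radius bounds, never the full radius $r$ in both factors simultaneously; and, as you note but do not resolve, the good set $B=\{x:\nu(A^x)\ge\tfrac12\}$ (and equally its $Y$-analogue) may fail to carry half the mass, so the dichotomy you hope to organise has a genuinely bad case.

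For comparison: the paper gives no proof of this proposition at all (it is quoted as a known fact, with the concentration-of-measure references), and the correct product statements in those references carry exactly the radius loss your split suggests. A version you can prove along your lines --- or more cleanly via the median-of-medians argument: if $F$ is $1$-Lipschitz on the product then $x\mapsto\mathrm{med}_\nu F(x,\cdot)$ is $1$-Lipschitz on $X$, and deviations of $F$ above/below the median of this function split into a $Y$-deviation and an $X$-deviation --- is $\alpha_{(X\times Y,d+\delta,\mu\times\nu)}\bigl(2(r+s)\bigr)\le\alpha_{(X,d,\mu)}(r)+\alpha_{(Y,\delta,\nu)}(s)$, in particular $\alpha_{X\times Y}(r)\le\alpha_X(r/4)+\alpha_Y(r/4)$. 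This weaker, correct form is all that is needed downstream: for Gaussian-type profiles $ce^{-Cnr^2}$, as used in the lemma that follows, it only changes the universal constants. So the right fix is not to push your case distinction harder, but to prove and use the split-radius inequality.
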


If we apply this proposition to the spaces we are interested in we have the following lemma.

\begin{lemma}
Let $\mu$ be the Haar measure in $(U(D),d_2$), the unitary group with the Hilbert-Schmidt distance. Let $\nu$ be the Lebesgue measure in $([0,1]^D,d_\infty)$, the hypercube with the maximum distance.
Then, for any $ k \in \mathbbm N$, the product space $(X,d_1)$, where $X=U(k D)\times U(D) \times U(D) \times [0,1]^D$ and $d_1$ is the $l_1$ distance of the product space, with the product probability measure $\eta=\mu \times \mu \times \mu \times \nu$ has concentration function
$$\alpha_{(X,\delta,\eta)}(r)\leq c e^{-C n r^2 },$$
where c and C are universal constants.
\end{lemma}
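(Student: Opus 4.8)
The plan is to bound the concentration function factor by factor and then glue the pieces together with the preceding proposition, which dominates the concentration function of an $\ell^1$-product by the sum of the concentration functions of its factors. I would first record, for a single unitary group, the classical Gaussian estimate: the normalized Haar measure on $U(m)$ with the Hilbert--Schmidt distance $d_2$ has concentration function at most $c\,e^{-C'mr^2}$ for universal constants $c,C'$ (see \cite{MS86,Ledoux01}; this follows from the lower bound by a multiple of $m$ on the Ricci curvature of $U(m)$ together with the Gromov--Milman / Bakry--\'Emery criterion, after transferring from the geodesic metric to $d_2$ via the universal bi-Lipschitz bound $d_2\le d_{\mathrm{geo}}\le\frac{\pi}{2}d_2$). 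Applying this with $m=kD$ and with $m=D$, and using $k\ge 1$, disposes of the three unitary factors, each with concentration at most $c\,e^{-C'Dr^2}$.

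For the cube factor I would use that the uniform law on $[0,1]$ is a log-concave probability measure on a bounded interval, hence satisfies a logarithmic Sobolev inequality with a universal constant; tensorization then yields the same inequality, and in particular a dimension-free Gaussian concentration estimate, for Lebesgue measure on $[0,1]^D$ with the Euclidean metric. Since $d_\infty\le d_2$ pointwise, the $r$-neighbourhood of a set for $d_2$ is contained in its $r$-neighbourhood for $d_\infty$, so the Euclidean estimate transfers verbatim to $([0,1]^D,d_\infty)$ and gives $\alpha_{\mathrm{cube}}(r)\le 2e^{-cr^2}$. Feeding the four single-factor bounds into the preceding proposition then yields the lemma; used afterwards together with Theorem~\ref{phenomenon}, it is this lemma that turns a Lipschitz bound on $\rho_l/\trace\rho_l$ into the exponentially small failure probability of Theorem~\ref{thetheorem}.

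The step I expect to require the most care is the cube factor. With the $\ell^\infty$ metric, Lebesgue measure on $[0,1]^D$ does not concentrate any faster than dimension-freely --- a single coordinate projection is $1$-Lipschitz yet fluctuates by $\Theta(1)$, and the half-space $\{x_1\le 1/2\}$ already gives $\alpha_{\mathrm{cube}}(r)\ge 1/2-r$ --- so the part of the exponent that grows with the dimension can only come from the unitary factors; if one wished the cube to contribute a factor $D$ as well, one would instead metrize it with the normalized $\ell^1$ (Hamming-type) distance, for which tensorization does produce that factor. Either way, once the four individual bounds are in hand the assembly via the preceding proposition is entirely routine.
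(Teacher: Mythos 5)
Your overall strategy (a Gaussian bound for each factor, then the preceding proposition to add the four concentration functions) is exactly what the paper intends: the lemma is stated there without proof, as an immediate application of that proposition. Two comments on the factor bounds. For the unitary factors, the curvature justification is slightly off: the Ricci tensor of $U(m)$ is degenerate along the centre direction $i\1$, so the Bakry--\'Emery/Gromov--Milman bound with constant of order $m$ applies to $SU(m)$, not to $U(m)$ directly; the stated estimate for $U(m)$ is still true, but it needs the standard extra step (write $U=e^{i\theta}V$ with $V$ Haar on $SU(m)$ and $\theta$ uniform, and use that the conditional mean $\theta\mapsto\mathbbm E_V f(e^{i\theta}V)$ is invariant under $m$-th roots of unity and $\sqrt m$-Lipschitz, hence oscillates only by $O(1/\sqrt m)$), or a citation that treats $U(m)$ explicitly.

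The substantive problem is the cube factor, and here your own last paragraph contradicts your claim that the assembly ``yields the lemma''. As you observe, $([0,1]^D,d_\infty,\mathrm{Leb})$ concentrates only dimension-freely: the half-space $\{ \Lambda_1\le 1/2\}$ gives $\alpha_{\mathrm{cube}}(r)\ge 1/2-r$, and pulling this set back through the (1-Lipschitz) projection shows $\alpha_{(X,d_1,\eta)}(r)\ge 1/2-r$ as well. Consequently the best bound your route (or the paper's) can produce is $c\,e^{-CDr^2}+c'e^{-c''r^2}$, i.e.\ a dimension-free $e^{-c''r^2}$, and the lemma as stated, with a rate $e^{-Cnr^2}$ (or $e^{-CDr^2}$) and universal constants, is simply false for the metric chosen on the cube; no choice of per-factor estimates can repair this. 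So the correct conclusion of your analysis is not ``the proposition yields the lemma'' but ``the lemma as stated cannot hold and must be amended''. This matters downstream: Theorem \ref{main} extracts a failure probability $c_1e^{-c_2\epsilon^2D/n^2}$ with $\epsilon=O(D^{-1/5})$, which genuinely needs a rate growing with $D$; with the $d_\infty$ cube in the product one only gets a constant-order failure probability at that scale. A repair has to treat $\Lambda$ differently -- e.g.\ condition on $\Lambda$ and note that the relevant functions depend on it essentially only through averages such as $\trace\Lambda/D$ and $\trace\Lambda^2/D$, which concentrate at rate $e^{-cD\epsilon^2}$ by Hoeffding, or re-metrize the cube by a normalized distance (but then the Lipschitz bound of Theorem \ref{Lipschitz}, computed against $\|\Lambda-\Lambda'\|_\infty$, must be redone). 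You should state this explicitly rather than presenting the assembled bound as a proof of the stated inequality.
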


\begin{remark}
Note that in this lemma, that will be used to prove the concentration in theorem \ref{main}, we do not consider the space $\mathbbm S_1([0,1]^D)$, as in all the theorems below the result holds independently of $\Omega \in \mathbbm S_1([0,1]^D)$. \end{remark}

\end{section}

\begin{section}{Proof of the main Theorem}\label{sec4}

Recall that we are considering the ensemble of MPS defined by the tuple $(U,L,R)=(U,V,W,\Lambda,\Omega)$ where $U$, $V$ and $W$ are distributed with respect to the Haar measure in the respective unitary group, $\Lambda$ is distributed according to the Lebesgue measure in $[0,1]^D$ and $\Omega$ according to any permutation invariant probability measure in $\mathbbm S_1 ( [0,1]^D)$. To prove Theorem \ref{thetheorem} we need to compute the mean and the Lipschitz constant of the trace normalized version of $f(\rho)=\tr (\rho^2_l (U,L,R))$ over the introduced ensemble. The difficulty of this calculus comes from computing the mean of the function. To simplify our computations we will first compute the mean and the Lipschitz constant for both function $f(\rho)$ and its normalization function $g(\rho)=(\tr \rho_l (U,L,R))^2$ and then argue about the concentration of $\tr(\rho_{Norm}^2)=f(\rho)/g(\rho)$.  To compute the mean of $f(\rho)$ we first need to give a novel asymptotic bound of the Weingarten function.

\begin{theorem}\label{bound}
Let $p$, $n$  and $k$ be nonnegative integers such that $p^k\leq n$.
Then there exists a constant K depending only on k such that for any $\sigma\in S_p$,
$$\Wg(n,\sigma) \leq K n^{-p-|\sigma|(1-2/k)}.$$
\end{theorem}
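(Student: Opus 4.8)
The plan is to rewrite the Weingarten function through the Jucys--Murphy elements of the group algebra $\mathbb{C}[S_p]$: this turns the alternating cancellations that are responsible for the factor $n^{-|\sigma|}$ into a sum of manifestly nonnegative terms, after which the estimate follows from a very crude count together with a geometric series.

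First I would pass from Definition~\ref{defWein} to a more convenient form. The standard hook--content formula $s_{\lambda,n}(1)=\frac{\chi^\lambda(1)}{p!}\prod_{\square\in\lambda}(n+c(\square))$, with $c(\square)$ the content of the cell $\square$, turns Definition~\ref{defWein} into $\Wg(n,\sigma)=\frac1{p!}\sum_{\lambda\vdash p}\frac{\chi^\lambda(1)\chi^\lambda(\sigma)}{\prod_{\square\in\lambda}(n+c(\square))}$. Let $J_i=\sum_{a<i}(a\,i)\in\mathbb{C}[S_p]$, $i=1,\dots,p$, be the Jucys--Murphy elements; they commute, and the joint eigenvalues of $(J_1,\dots,J_p)$ on each Gelfand--Tsetlin basis vector of an irreducible $\lambda$ are exactly the contents of $\lambda$ (each once), so that $\prod_{i=1}^p(n+J_i)$ acts on $\lambda$ as the scalar $P_\lambda(n):=\prod_{\square\in\lambda}(n+c(\square))$, which is nonzero since $|c(\square)|\le p-1<n$. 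On the other hand $\Wg(n,\cdot)$ is a class function, so $\sum_\sigma\Wg(n,\sigma)\sigma$ is central and acts on $\lambda$ as the scalar $\frac1{\chi^\lambda(1)}\sum_\sigma\Wg(n,\sigma)\chi^\lambda(\sigma)$, which equals $1/P_\lambda(n)$ by character orthogonality and the formula above. Comparing actions on every isotypic component identifies, in $\mathbb{C}[S_p]$,
$$\sum_{\sigma\in S_p}\Wg(n,\sigma)\,\sigma=\Bigl(\prod_{i=1}^p(n+J_i)\Bigr)^{-1}=\frac1{n^p}\prod_{i=1}^p\Bigl(1+\tfrac{J_i}{n}\Bigr)^{-1}=\frac1{n^p}\sum_{m\ge0}\frac{(-1)^m}{n^m}\,h_m(J_1,\dots,J_p),$$
where $h_m$ is the complete homogeneous symmetric polynomial of degree $m$ and the middle step uses that the $J_i$ commute.

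The gain is positivity. Since $h_m$ has nonnegative integer coefficients and each $J_i$ is a nonnegative integer combination of transpositions, $h_m(J_1,\dots,J_p)$ is a nonnegative integer combination of permutations, each of which is a product of exactly $m$ transpositions. Hence its coefficient $W_m(\sigma):=[\sigma]\,h_m(J_1,\dots,J_p)$ is a nonnegative integer that vanishes for $m<|\sigma|$, and since any length-$m$ word of transpositions with product $\sigma$ is in particular one of the at most $\binom{p}{2}^m$ length-$m$ words in the transpositions of $S_p$, we get $W_m(\sigma)\le\binom{p}{2}^m$. Extracting the coefficient of $\sigma$ above gives $\Wg(n,\sigma)=n^{-p}\sum_{m\ge|\sigma|}(-1)^m n^{-m}W_m(\sigma)$, and estimating the series term by term,
$$|\Wg(n,\sigma)|\le n^{-p}\sum_{m\ge|\sigma|}\Bigl(\tfrac{\binom{p}{2}}{n}\Bigr)^m.$$

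To conclude I would invoke the hypothesis $p^k\le n$ twice. For $k\ge2$ it forces $p(p-1)<p^2\le n$, so $\theta:=\binom{p}{2}/n\le\tfrac12$ and $\sum_{m\ge|\sigma|}\theta^m\le2\theta^{|\sigma|}$, whence $|\Wg(n,\sigma)|\le 2n^{-p}(p^2/2n)^{|\sigma|}\le 2\,p^{2|\sigma|}\,n^{-p-|\sigma|}$; then $p^2\le n^{2/k}$, again from $p^k\le n$, gives $p^{2|\sigma|}\le n^{2|\sigma|/k}$, and altogether $|\Wg(n,\sigma)|\le 2\,n^{-p-|\sigma|(1-2/k)}$, which is the assertion with $K=2$ (uniformly in $k\ge2$; for $k\in\{0,1\}$ the stated exponent is no smaller than $-p$, so nothing beyond the trivial bound $\Wg(n,\sigma)\le Kn^{-p}$ is being claimed). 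The only genuinely non-mechanical step is the Jucys--Murphy identity for $\sum_\sigma\Wg(n,\sigma)\sigma$: that is precisely where the alternating cancellations of the character sum $\sum_\lambda\chi^\lambda(1)\chi^\lambda(\sigma)/P_\lambda(n)$ get repackaged as a signed sum of nonnegative integers. Once one commits to this algebraic rewriting rather than attacking the character sum head on, the remainder is an over-generous count plus a geometric series.
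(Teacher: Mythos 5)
Your proof is correct on the range where the statement has content ($k\ge 2$; note the paper's own argument also silently assumes $p^2\le n$), but it takes a genuinely different route. The paper works directly with the character expansion: after rewriting $s_{\lambda,n}(1)$, it expands $\prod_i(1-\lambda_i/n)^{-1}$ as a power series in $1/n$, controls the coefficients by a Cauchy estimate on the holomorphic function $f_\lambda$, and imports from \cite{Collins03} the fact that the coefficients of $n^{-p-i}$ vanish for $i<|\sigma|$; a geometric series then yields the bound with a constant of the order of $5e$. You instead identify $\sum_\sigma\Wg(n,\sigma)\,\sigma$ with $\bigl(\prod_i(n+J_i)\bigr)^{-1}$ through the Gelfand--Tsetlin spectrum of the Jucys--Murphy elements and expand in complete homogeneous symmetric polynomials of the $J_i$. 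This makes both key inputs self-contained: the vanishing of the coefficient of $n^{-p-m}$ for $m<|\sigma|$ is immediate (no word of $m<|\sigma|$ transpositions multiplies to $\sigma$), and the coefficient bound $W_m(\sigma)\le\binom{p}{2}^m$ is an elementary word count, slightly sharper than the paper's $e\,p^{2i}$, giving the explicit constant $K=2$. The trade-off is that you must establish the group-algebra identity, which you do correctly via centrality, the content spectrum of the $J_i$, and character orthogonality, whereas the paper needs only the character formula but leans on an external vanishing lemma and a complex-analytic estimate. One small caveat: your dismissal of $k\in\{0,1\}$ by appeal to "the trivial bound $Kn^{-p}$" is not actually covered by your argument (the geometric series requires $\binom{p}{2}<n$), but the paper's proof is subject to exactly the same restriction, so this does not put your argument at a disadvantage.
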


\begin{proof}
We recall, see \cite{Fulton97}, that for any partition $\lambda \vdash p$ of the integer $p$,
$$s_{\lambda,n}(1)=\frac {\chi^\lambda (1)}{p ! }\prod_{i=1}^p (n-\lambda_i)$$
where $\lambda_i$ is an integer in $\{0,...,p-1\}$. Equivalently, the Weingarten function becomes
\begin{equation}\label{weingeq}
\Wg(n,\sigma)=\frac {1} {n^p p!} \sum_{\lambda \vdash p} \frac {\chi^\lambda (1) \chi^\lambda (\sigma)} {\prod_{i=1}^{p} (1-\lambda_i/n)}.
\end{equation}
Consider the function
$$f_\lambda: z \rightarrow (\prod_{i=1}^p (1-z\lambda_i))^{-1}$$
This function is holomorphic in a neighborhood of zero. Moreover $2\leq k$ , since $p^2\leq n$, we have, for any $|z|\leq p^{-2}$,
$$|f_\lambda(z) | \leq e.$$

As a consequence, writing $f_\lambda(z)=\sum_{i\geq 0} a_{i,\lambda} z^i$, we obtain the Cauchy estimate
$$a_{i,\lambda}\leq e p^{2i}.$$
But equation \ref{weingeq} implies that
$$\Wg(n,p,\sigma)=\frac {1} {n^p p!} \sum_{\lambda \vdash p} \chi^\lambda (1) \chi^\lambda (\sigma) (1+\sum_{i\geq 1} a_{i,\lambda} n^{-i}).$$
Therefore the coefficient in $n^{-p-i}$ has norm smaller than $e p^{2i}$. But all coefficients are zero until $i=|\sigma|$ \cite {Collins03}, therefore
$$\Wg(n,\sigma)\leq n^{-p-|\sigma|} e(1+\frac {p^2}{n}+(\frac {p^2}{n})^2+...)p^{2|\sigma|}.$$
For $n\geq 2$,  and since $p^k\leq n$, this implies $p^{2|\sigma|}\leq n^{2|\sigma|/k}$. Furthermore $(1+\frac {p^2}{n}+(\frac {p^2}{n})^2+...)$ can be bounded by a universal constant (5, for example). The result follows.

\end{proof}

To organize the computations and the reasoning used in the bound of the mean of $f(\rho)$ we prove the following two lemmas.

\begin{lemma}\label{gamma}
Let $\alpha, \beta, \gamma \in S_{p}$,

a) the quantity $|\gamma ^{-1} \alpha \gamma \alpha^{-1} \beta|+|\beta| $ is an even number.

b) If $\gamma ^{-1} \alpha \gamma \alpha^{-1} \beta=c$ and $\gamma ^{-1} \alpha' \gamma \alpha'^{-1} \beta=c$, then $\alpha'^{-1}\alpha$ commutes with $\gamma$.

c) If $p=2n+4$ and $\gamma=(2n+1,1, 2,..., n,2n+3)(2n+2,n+1,n+2,....,2n,2n+4)$ and $\alpha(2n+i)=2n+i$ for $i=1,...,4$. Then the function that takes $(\alpha ,\beta)$ to $(g,h)$ with $g=\gamma^{-1} \alpha \gamma \alpha^{-1}$ and $h =\beta \alpha^{-1}$ is one to one.

\end{lemma}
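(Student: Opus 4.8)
The three statements are of quite different natures, so I would handle them in turn.

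For part (a), the key observation is that for any permutations $\pi,\tau\in S_p$ one has $|\pi|\equiv p-\#\pi \pmod 2$ and that $|\pi\tau|\equiv |\pi|+|\tau| \pmod 2$ — this is just the statement that the parity of $|\pi|$ agrees with the sign of $\pi$ as a signed permutation. Applying this to $\pi=\gamma^{-1}\alpha\gamma\alpha^{-1}\beta$ we get $|\gamma^{-1}\alpha\gamma\alpha^{-1}\beta|\equiv |\gamma^{-1}\alpha\gamma|+|\alpha^{-1}|+|\beta|\pmod 2$. Since $\gamma^{-1}\alpha\gamma$ is conjugate to $\alpha$, it has the same cycle type, hence $|\gamma^{-1}\alpha\gamma|=|\alpha|=|\alpha^{-1}|$, and these two contributions cancel mod $2$. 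Therefore $|\gamma^{-1}\alpha\gamma\alpha^{-1}\beta|\equiv |\beta|\pmod 2$, which is exactly the assertion that their sum is even.

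For part (b), from $\gamma^{-1}\alpha\gamma\alpha^{-1}\beta=\gamma^{-1}\alpha'\gamma\alpha'^{-1}\beta$ I cancel $\beta$ on the right to get $\gamma^{-1}\alpha\gamma\alpha^{-1}=\gamma^{-1}\alpha'\gamma\alpha'^{-1}$. Rearranging, $\alpha'^{-1}\gamma^{-1}\alpha\gamma=\alpha'^{-1}\alpha'\gamma^{-1}\alpha'\gamma\alpha'^{-1}\cdot\alpha$... more cleanly: the equation $\gamma^{-1}\alpha\gamma\alpha^{-1}=\gamma^{-1}\alpha'\gamma\alpha'^{-1}$ gives $\alpha'^{-1}\gamma^{-1}\alpha\gamma=\gamma^{-1}\alpha'^{-1}\alpha'\cdots$; the clean manipulation is to write it as $(\alpha')^{-1}\alpha = (\alpha')^{-1}\gamma(\alpha')^{-1}\alpha\gamma^{-1}\alpha$ — I would simply move terms so that $\gamma^{-1}\big((\alpha')^{-1}\alpha\big)\gamma = (\alpha')^{-1}\alpha$, i.e.\ $\gamma$ commutes with $(\alpha')^{-1}\alpha$. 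Concretely: from $\gamma^{-1}\alpha\gamma\alpha^{-1}=\gamma^{-1}\alpha'\gamma\alpha'^{-1}$, multiply on the left by $\gamma$ and on the right by $\alpha$: $\alpha\gamma\alpha^{-1}... $ — I'll present the two-line cancellation carefully in the writeup, the content being that $(\alpha')^{-1}\alpha\in C_{S_p}(\gamma)$.

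For part (c), I would argue bijectivity by exhibiting the inverse explicitly. Given $(g,h)$ in the image, recover $\beta=h\alpha^{-1}\cdot(\text{something})$... — rather, note $h=\beta\alpha^{-1}$ so $\beta=h\alpha$, and $g=\gamma^{-1}\alpha\gamma\alpha^{-1}$ determines $\alpha$ up to the centralizer ambiguity of part (b); the point of the specific $\gamma$ chosen here (a product of two disjoint cycles, each of length $n+2$, together covering all of $\{1,\dots,2n+4\}$) is that its centralizer inside the subgroup $\{\alpha: \alpha(2n+i)=2n+i,\ i=1,\dots,4\}$ is trivial — a permutation fixing the four points $2n+1,\dots,2n+4$ and commuting with $\gamma$ must be trivial, because those four points sit on the two cycles of $\gamma$ and pin down the cyclic structure. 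So the map $\alpha\mapsto g=\gamma^{-1}\alpha\gamma\alpha^{-1}$ is injective on that subgroup by part (b), and then $\alpha$ is recovered from $g$, after which $\beta=h\alpha$ is recovered from $h$; this gives injectivity of $(\alpha,\beta)\mapsto(g,h)$ (and since domain and a suitable codomain have matching sizes, bijectivity onto the image).

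The main obstacle is part (c): the counting/structure argument that the centralizer of this particular $\gamma$, intersected with the stabilizer of the four boundary points, is trivial — this requires looking carefully at how the two $(n+2)$-cycles of $\gamma$ are anchored by the fixed points $2n+1,2n+3$ on one cycle and $2n+2,2n+4$ on the other, so that any commuting permutation that fixes all four must fix each cycle setwise and act as a power of it, and the two anchor points on each cycle then force that power to be the identity.
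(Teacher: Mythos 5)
Your proposal is correct and follows essentially the same route as the paper: (a) via parity/sign additivity of $|\cdot|$, (b) by cancelling $\beta$ and rearranging to show $\alpha'^{-1}\alpha$ centralizes $\gamma$, and (c) by reducing injectivity to the triviality of the centralizer of $\gamma$ inside the stabilizer of $2n+1,\dots,2n+4$. Your explicit anchor-point argument for (c) merely fills in what the paper dismisses as ``easily checked,'' so there is no substantive difference.
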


\begin{proof}
a) The result follows from the fact that the parity of $|\alpha \beta|$ is the same as the parity of $| \alpha|+|\beta|$.

b)  If $\gamma ^{-1} \alpha \gamma \alpha^{-1} \beta=c$ and $\gamma ^{-1} \alpha' \gamma \alpha'^{-1} \beta=c$, then $\alpha \gamma \alpha^{-1}=\alpha' \gamma \alpha'^{-1}$. Multiplying by the inverse of the right hand side we have that $\gamma^{-1}\alpha'^{-1}\alpha \gamma \alpha^{-1}\alpha'=\1$ which happens if and only if $\alpha'^{-1}\alpha$ commutes with $\gamma$.

c) If $\alpha$ is fixed, the change $h=\beta \alpha^{-1}$ is clearly one to one. Now, by b), the change $g=\gamma^{-1} \alpha \gamma \alpha^{-1}$ is one to one if and only if $\gamma^{-1} \alpha \gamma \alpha^{-1}=\1$ has only the trvial solution $\alpha=\1$, which can be easily check to be the case by the definition of $\gamma$ and the constraints in $\alpha$.
\end{proof}

\begin{lemma}\label{averages}
Let $L \in \mathcal L$ and $R\in \mathcal R$ be given at random with respect to the measures introduced in Section \ref{sec:MPS}. For any $\Omega \in \mathbbm S_1([0,1]^D)$, we have that $\mathbbm E[ \tr (L) ]=D/2$, $\mathbbm E[ \tr (L^2) ]=D/4$, $\mathbbm E[ \tr (R) ]=1$, $\mathbbm E[ \tr (R^2) ]\leq1$, $\mathbbm E[ \tr (L R) ]=1/2$, $\mathbbm E[ \tr (L R R) ] \leq 1/2$, $\mathbbm E[ \tr (L L R) ] = 1/4$, $\mathbbm E[ \tr (L L R R)] \leq 1/ 4$ and $\mathbbm E[ \tr (L R L R) ]\leq 1/4+ 1/4D$. 
\end{lemma}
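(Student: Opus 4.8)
\noindent\emph{Proof plan.} The plan is to split the nine estimates into three groups according to how the random unitaries enter.

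\emph{(i) Powers of a single matrix} ($\trace(L)$, $\trace(L^2)$, $\trace(R)$, $\trace(R^2)$). By conjugation invariance of the trace, $\trace(L)=\trace(\Lambda)$ and $\trace(L^2)=\trace(\Lambda^2)$ depend only on the diagonal $\Lambda$, so $\mathbbm E[\trace(L^m)]=\sum_i\mathbbm E[\lambda_i^m]$ is read off from the moments $\int_0^1 x^m\,dx$ of the uniform law on $[0,1]$; in particular $\mathbbm E[\trace(L)]=D/2$. Likewise $\trace(R)=\trace(\Omega)=1$ is deterministic, while $\trace(R^2)=\trace(\Omega^2)=\sum_k\omega_k^2\le\bigl(\sum_k\omega_k\bigr)^2=1$ because $\Omega\ge0$ with $\trace(\Omega)=1$; I will also use the companion bound $\trace(\Omega^2)\ge1/D$, which is Cauchy--Schwarz.

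\emph{(ii) Mixed traces $\trace(L^aR^b)$, $a,b\in\{1,2\}$} ($\trace(LR)$, $\trace(LRR)$, $\trace(LLR)$, $\trace(LLRR)$). Here I condition on $L$ and integrate first over the Haar unitary $W$ in $R=W\Omega W^\dagger$: using $W^\dagger W=\1$ to collapse $R^b$ to $W\Omega^bW^\dagger$ and the first--moment identity $\int_{U(D)}W\,\Omega^b\,W^\dagger\,dW=D^{-1}\trace(\Omega^b)\,\1$ gives $\mathbbm E_W[\trace(L^aR^b)]=D^{-1}\trace(\Omega^b)\,\trace(L^a)$. Taking the expectation over $\Lambda$ and bounding $\trace(\Omega^b)\le(\trace\Omega)^b=1$ (the eigenvalues of $\Omega$ lie in $[0,1]$), together with the values from (i), yields the four corresponding estimates, uniformly in $\Omega$.

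\emph{(iii) The term $\trace(LRLR)$.} This is the only estimate that genuinely needs the second--order Weingarten calculus, because once $R=W\Omega W^\dagger$ is substituted the two copies of $W$ are separated by factors of $L$ and the first--moment identity no longer applies. Expanding, $\trace(LRLR)$ is a polynomial of degree $p=2$ in the entries of $W$ and $\overline W$, with entries of $L$ and $\Omega$ as coefficients, so the Weingarten formula (\ref{Graph}) of Section~\ref{Weing} applies with $p=2$ and produces four terms indexed by $(\sigma,\tau)\in S_2\times S_2$. In each term the row matching $\sigma$ contracts the two copies of $L$ --- giving $\trace(L^2)$ for $\sigma=e$ and $(\trace L)^2$ for $\sigma=(12)$ --- while the column matching $\tau$ contracts the two copies of $\Omega$ --- giving $(\trace\Omega)^2=1$ for $\tau=e$ and $\trace(\Omega^2)=\trace(R^2)$ for $\tau=(12)$. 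With the standard values $\Wg(D,e)=1/(D^2-1)$ and $\Wg(D,(12))=-1/(D(D^2-1))$ for $p=2$, one obtains
$$\mathbbm E_W\bigl[\trace(LRLR)\bigr]=\frac{1}{D^2-1}\Bigl[\trace(L^2)\Bigl(1-\tfrac{\trace(\Omega^2)}{D}\Bigr)+(\trace L)^2\Bigl(\trace(\Omega^2)-\tfrac1D\Bigr)\Bigr].$$
Both bracketed coefficients are nonnegative by the bounds $1/D\le\trace(\Omega^2)\le1$ from (i), so I may bound $\trace(\Omega^2)\le1$ and then take the expectation over $\Lambda$, using $\mathbbm E[(\trace L)^2]=\mathbbm E[\trace(L^2)]+D(D-1)\,\mathbbm E[\lambda_1]^2$. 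A short simplification of the resulting rational function of $D$ gives $\mathbbm E[\trace(LRLR)]\le\tfrac14+\tfrac1{6(D-1)}\le\tfrac14+\tfrac1{4D}$ in the relevant range $D\ge3$.

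\emph{Main obstacle.} The crux is group (iii): the other eight estimates are one--line first--moment computations, but $\trace(LRLR)$ forces the full $p=2$ Weingarten sum, so one must assemble all four $(\sigma,\tau)$ contributions and keep careful track of the negative coefficient $\Wg(D,(12))<0$ --- it is exactly this term, combined with $\trace(\Omega^2)\le1$, that prevents $\mathbbm E[\trace(LRLR)]$ from overshooting its leading value $\tfrac14$ (the $(\trace L)^2/D^2$ contribution) by more than the allotted $O(1/D)$. Making that last rational--function estimate sharp enough to land under $\tfrac1{4D}$, rather than a cruder $O(1/D)$, is the one step requiring a bit of care.
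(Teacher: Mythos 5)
Your overall route is the same as the paper's: the eight ``easy'' traces are handled by direct first/second--moment computations (the paper itself says these averages ``are not difficult to compute directly''), and the only term requiring second--order Weingarten calculus is $\trace(LRLR)$, which the paper evaluates graphically and you evaluate with the explicit $p=2$ formula $\Wg(D,e)=1/(D^2-1)$, $\Wg(D,(12))=-1/(D(D^2-1))$. Your expression for $\mathbbm E_W[\trace(LRLR)]$ agrees with the paper's graphical computation, and your final estimate $\le 1/4+1/(4D)$ is the right one.

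Two caveats. First, a genuine mismatch with the statement you set out to prove: under the stated measure the $\lambda_i$ are i.i.d.\ uniform on $[0,1]$, so $\mathbbm E[\trace(L^2)]=D\int_0^1x^2\,dx=D/3$, not $D/4$, and accordingly your group (ii) yields $\mathbbm E[\trace(LLR)]=1/3$ and $\mathbbm E[\trace(LLRR)]\le 1/3$ rather than the stated $1/4$ values. Your plan asserts that these stated equalities are ``read off from the moments $\int_0^1 x^m\,dx$'', which they are not; a blind proof should either have produced the stated numbers or flagged that they are unattainable. (This is in fact a slip in the lemma itself --- the paper's own proof also uses $\mathbbm E[\trace(\Lambda^2)]=D/4$ --- and it is harmless downstream, since only orders of magnitude are used there; indeed your own calculation shows that the one bound that matters later, $\mathbbm E[\trace(LRLR)]\le 1/4+1/(4D)$, survives with the correct moment $1/3$.) Second, a small justification gap in step (iii): the observation that ``both bracketed coefficients are nonnegative'' does not by itself license setting $\trace(\Omega^2)=1$, because that substitution decreases the first bracket; what you actually need (and evidently used, given that you land at $1/4+O(1/D)$ with the right constant) is that the expression is affine in $t=\trace(\Omega^2)$ with nonnegative slope $\bigl((\trace L)^2-\trace(L^2)/D\bigr)/(D^2-1)$, hence maximized at $t=1$. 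Note that the cruder alternative of bounding the two brackets separately by constants loses a factor and does not get you under $1/4+1/(4D)$, so this monotonicity remark should be made explicit.
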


\begin{proof}
These averages are not difficult to compute directly, but they can also be computed using the graphical calculus described in section \ref{Weing} as a warming up for the forthcoming computations. We compute the last one as example.
$$\mathbbm E[ \tr (L R L R) ]=\mathbbm E[ \tr ( V \Lambda V^\dagger W \Omega W^\dagger V \Lambda V^\dagger W \Omega W^\dagger) ]=\mathbbm E[ \tr (U \Lambda U^\dagger \Omega U \Lambda U^\dagger \Omega ) ],$$
where $\Lambda \in [0,1]^D$, $\Omega \in \mathbbm S_1([0,1]^D)$ and $U,V,W \in U(D)$ and the second equation follows by the invariance of the Haar measure. Using that $\mathbbm E_{\Lambda} [\tr \Lambda]=D/2$,\hspace{0,1cm} $\mathbbm E_{\Lambda}[\tr (\Lambda^2)]=D/4$,\hspace{0,1cm} $\tr( \Omega^2)\leq 1$ and $ (\tr \Omega)^2=1$\hspace{0,1cm} together with the graphical calculus in figure \ref{figWeing1} we get
$$\mathbbm E_{L,R}[ \tr (L R L R) ]=\mathbbm E_{\Lambda,\Omega}[\sum_{\alpha, \beta \in S_2} C_{(\alpha, \beta)}\Wg(D,\alpha \beta^{-1})]=$$
$$=\mathbbm E_{\Lambda,\Omega} [((\tr \Lambda)^2 \tr( \Omega^2)+\tr (\Lambda)^2 (\tr \Omega)^2)\Wg(D,(1)(2))- ((\tr \Lambda)^2 (\tr \Omega)^2+(\tr \Lambda^2) (\tr \Omega^2))\Wg(D,(12))]$$
$$\leq 1/4+1/4D.$$
\end{proof}

\begin{figure}[t]
\includegraphics[width=15cm]{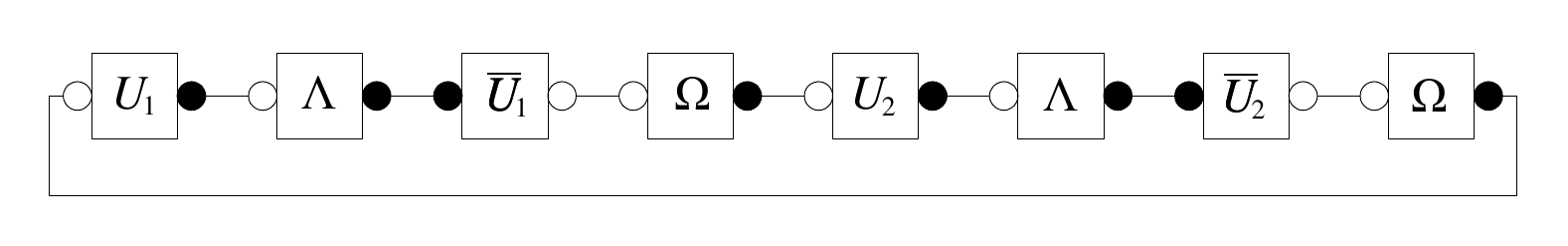}\\
\caption{Computation of $\mathbbm E_{L,R}[ \tr (L R L R) ]$ using Weingarten graphical calculus: for any two permutations $\alpha, \beta \in S_2$ delete the $U$ and $\overline U$ matrices, join the white circle of $U_i$ with the white circle of $U_{\alpha(i)}$, join the black circle of $U_i$ with the black circle of $U_{\beta(i)}$. The number $C_{(\alpha, \beta)}$ is the product of the traces involved in the new picture.}\label{figWeing1}
\end{figure}

Now, we have all the ingredients in order to compute the averages of $f(\rho)$ and $g(\rho)$.

\begin{theorem}\label{rho}
Let $\rho_l$ be taken at random from the ensemble introduced. Then, for any $\Omega \in \mathbbm S_1([0,1]^D)$ and $U\in U(dD)$, we have that
$$\mathbbm E_{\Lambda,V,W} (\tr \rho_l(\Lambda,\Omega,U,V,W))  =1/2$$
\end{theorem}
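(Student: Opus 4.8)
The plan is to observe that, because the MPS transfer operator is trace-preserving, the full trace of $\rho_l$ collapses to a one-line expression involving only $L$ and $\trace R$, after which the statement follows from two entirely elementary averages.

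First I would note that tracing $\rho_l$ over its $l$ central sites is the same as tracing $L\,\mathcal E^n(R)$ over $A$ and over all $n$ physical sites, so $\trace\rho_l=\trace_{A,B_1\dots B_n}[L\,\mathcal E^n(R)]$. Partial-tracing the physical system created by one application of $\mathcal E$ produces the transfer map $\mathcal T(X):=\trace_B\mathcal E(X)=\sum_i A_iXA_i^\dagger$, which is trace-preserving — this is precisely the property recalled in Section~\ref{sec:MPS} that motivates the normalization $\trace R=1$. Iterating gives $\trace_{B_1\dots B_n}[\mathcal E^n(R)]=\mathcal T^n(R)$, and since $L$ acts only on $H_A$,
$$\trace\rho_l=\trace_A\!\big[L\,\mathcal T^n(R)\big],\qquad \trace\big[\mathcal T^n(R)\big]=\trace R=\trace\Omega=1 .$$

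Next I would average over $V$. Writing $L=V\Lambda V^\dagger$ and using the first-order Haar identity $\mathbbm E_V[V\Lambda V^\dagger]=(\trace\Lambda/D)\,\1$ — equivalently the $p=1$ case $\mathbbm E_V[V_{ab}\overline{V_{cd}}]=D^{-1}\delta_{ac}\delta_{bd}$ of the unitary integration formula — together with the fact that $\mathcal T^n(R)$ does not depend on $V$, I get
$$\mathbbm E_{V}\!\big[\trace\rho_l\big]=\frac{\trace\Lambda}{D}\,\trace\big[\mathcal T^n(R)\big]=\frac{\trace\Lambda}{D} .$$
This already removes every dependence on $W$, on $U$, and on $\Omega$ other than $\trace\Omega=1$, matching the universality claimed in the statement. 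Averaging finally over $\Lambda$, whose diagonal entries are i.i.d.\ uniform on $[0,1]$, and using $\mathbbm E_\Lambda[\trace\Lambda]=D/2$ from Lemma~\ref{averages}, I conclude $\mathbbm E_{\Lambda,V,W}[\trace\rho_l]=(1/D)(D/2)=1/2$.

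I do not expect a genuine obstacle here. In contrast with the computation of $\mathbbm E[f(\rho)]=\mathbbm E[\trace\rho_l^2]$, where squaring prevents the transfer operator from collapsing and forces one into the full graphical Weingarten calculus and the asymptotic bound of Theorem~\ref{bound}, the linear quantity $\trace\rho_l$ is degenerate: all of its content is trace-preservation of $\mathcal T$, and the only care needed is the partial-trace bookkeeping that identifies $\trace_{B_1\dots B_n}\mathcal E^n$ with $\mathcal T^n$. One could alternatively derive the identity through the graphical Weingarten calculus exactly as in the proof of Lemma~\ref{averages}, the corresponding diagram containing a single loop that contracts the $A_i$'s via trace-preservation down to the factor $\trace\Omega=1$; but the direct argument above is shorter and makes the independence of $U$ and $\Omega$ transparent.
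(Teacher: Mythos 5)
Your proof is correct and arrives at the same value by a genuinely (if mildly) different route than the paper. The paper's proof averages over \emph{both} $V$ and $W$, replacing $L$ and $R$ by $\trace(\Lambda)\1_A/D$ and $\trace(\Omega)\1_A/D$, and then collapses $\mathcal E^n$ using that $\1_A$ is a fixed point of $\trace_B\mathcal E$ --- i.e.\ unitality of the transfer map, which is what the parametrization $A_i=\bra 0 U\ket i$ literally provides, since $\sum_i A_iA_i^\dagger=\1$. You instead invoke trace preservation, $\trace[\mathcal T^n(R)]=\trace R=1$, which lets you dispense with the $W$-average entirely; the only averages you need are $\mathbbm E_V[V\Lambda V^\dagger]=(\trace\Lambda/D)\,\1$ and $\mathbbm E_\Lambda[\trace\Lambda]=D/2$. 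For the circuit expression of $\rho_l(U,V,W,\Lambda,\Omega)$ used in the statement (conjugation of $R\otimes(\ket 0\bra 0)^{\otimes n}$ by the $U_{A,B_k}$ followed by the trace over all $B$ sites), trace preservation of the iterated transfer map is automatic, so your key identity is valid there, and your argument in fact yields the marginally stronger statement that $\mathbbm E_{\Lambda,V}[\trace\rho_l]=1/2$ for every fixed $W$, $\Omega$ and $U$. One caution: be precise about which normalization the Kraus operators satisfy. Although Section~\ref{sec:MPS} speaks of the map $X\mapsto\sum_i A_iXA_i^\dagger$ as trace preserving, with $A_i=\bra 0 U\ket i$ what actually holds is $\sum_i A_iA_i^\dagger=\1$ (unitality), not $\sum_i A_i^\dagger A_i=\1$; under that convention your step $\trace[\mathcal T^n(R)]=\trace R$ would fail for fixed $U$, and one should fall back on the paper's device of also averaging over $W$ so that $R$ becomes $\1_A/D$ and the fixed-point property applies. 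Either way the conclusion $\mathbbm E_{\Lambda,V,W}[\trace\rho_l]=1/2$ stands.
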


\begin{proof}
$$\mathbbm E_{\Lambda,V,W} (\tr \rho_l(\Lambda,\Omega,U,V,W)) =\mathbbm E_{\Lambda}(\tr \mathbbm E_{V,W}{[V \Lambda V^\dagger \mathcal E^n(W \Omega W^\dagger)]})=$$
$$=\mathbbm E_{\Lambda}(\tr_A \tr_{B_1,...,B_n}{[\tr(\Lambda) \frac {\1_A} D \mathcal E^n(\tr (\Omega)\frac {\1_A} D)]})=\mathbbm E_{\Lambda}(\tr_A {\tr(\Lambda) \frac {\1_A} {D^2}}) =\mathbbm E_{\Lambda} (\tr(\Lambda)/D)=1/2.$$
The first equality follows by linearity of the trace, the third because $\1_A$ is the fixed point of $tr_B(\mathcal E)$, the second and fourth just by computing the averages  themselves.
\end{proof}

Note that in this proof we are averaging over V and W, and our expression is a polynomial of degree one with respect to them. This, together with the fact that the average is independent of U make it easy to compute the average. The bound of the other function is much more involved and makes use of the asymptotic bound of the Weingarten function and Lemmas \ref{gamma} and \ref{averages}. We state it. The proof will be given in Appendix \ref{appA}.

\begin{theorem}\label{rhosquare}
Let $\rho_l$ be taken at random from the ensemble introduced and $D\geq n^5$. For any $\Omega \in \mathbbm S_1([0,1]^D)$, then 
$$\mathbbm E(\tr \rho^2_l) \leq  \frac 1 {4 d^{l}} + O(D^{-1/5}).$$
\end{theorem}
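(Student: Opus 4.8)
The plan is to expand $\tr (\rho_l^2)$ with the graphical Weingarten calculus of Section \ref{Weing}. Since $\mathcal E$ contains one $U$ and one $\overline U$, the word $\rho_l$ has degree $n$ and $\tr(\rho_l^2)$ has degree $p=2n$ in $(U,\overline U)$, so
$$\mathbbm E_U(\tr (\rho_l^2)) = \sum_{\sigma,\tau\in S_{2n}} C_{(\sigma,\tau)}\,\Wg(dD,\tau\sigma^{-1}).$$
First I would read the coefficients $C_{(\sigma,\tau)}$ off the diagram of $\rho_l^2$ (two copies of the chain of Figure \ref{figint} glued along the $l$ central physical legs): each $C_{(\sigma,\tau)}$ is a product of a power $d^{\,b(\sigma,\tau)}$ (loops through the $B$-legs, with the $l$ central sites contracted between the two copies and the other $B$-legs traced inside one copy), a power $D^{\,a(\sigma,\tau)}$ (loops through the ancilla $A$), and traces of words in $L$ and $R$ coming from the $A$-paths passing through the boundary operators. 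The distinguished term is the \emph{geometric} wiring, where every $U$ is matched with the $\overline U$ facing it in the transfer chain and the two copies are matched in parallel: there $\tau\sigma^{-1}=\1$, $\sigma$ is the shift permutation $\gamma$ with the two-cycle-type fixed in Lemma \ref{gamma}(c), the number of $A$-loops is maximal, only the short words $\tr(L)$, $\tr(L^2)$, $\tr(R)$, $\ldots$, $\tr(LRLR)$ occur, and $\mathbbm E_{\Lambda,V,W}$ applied through Lemma \ref{averages} produces exactly $\tfrac{1}{4d^{\,l}}$ — the $\tfrac14$ being the same $\mathbbm E[\tr L]^2/D^2$ factor as in Theorem \ref{rho}, the $d^{-l}$ coming from the $l$ open central loops.

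The second step is to bound every non-geometric pair. Because $D\ge n^5$ one has $(2n)^k\le dD$ for a suitable constant $k$, so Theorem \ref{bound} gives $\Wg(dD,\tau\sigma^{-1})\le K\,(dD)^{-2n-|\tau\sigma^{-1}|(1-2/k)}$. Bounding $C_{(\sigma,\tau)}\le d^{\,b(\sigma,\tau)}D^{\,a(\sigma,\tau)}$ times the $O(1)$ word-traces (after $\mathbbm E_{\Lambda,\Omega}$, by Lemma \ref{averages}, or crudely by $\|L\|_\infty\le1$, $\tr R=1$), and expressing $a(\sigma,\tau)$ and $b(\sigma,\tau)$ through $\#(\cdot)=2n-|\cdot|$ of products of $\sigma$, $\tau$ and the wiring permutation $\gamma$, one reduces everything to showing that the net exponent of $D$ attached to each term is at most $-\varepsilon$ times the Cayley distance of $(\sigma,\tau)$ from the geometric configuration, for a fixed $\varepsilon>0$; this is strictly negative off the distinguished term. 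Here Lemma \ref{gamma} does the combinatorial work: part (a) removes, by parity, the would-be competitors to the leading $D$-power; part (b) shows that the pairs producing a given diagram form a coset of the centraliser of $\gamma$, bounding their number; part (c) gives an honest bijection $(\alpha,\beta)\mapsto(g,h)$ used to re-index the sum. Summing over pairs at distance $m$, of which there are at most $(\operatorname{poly}(n))^{m}$, yields a geometric series whose sum, using $n\le D^{1/5}$, is $O(D^{-1/5})$, the dominant contribution coming from the nearest non-geometric family (consistent with the $1/(4D)$ correction already visible in the $\tr(LRLR)$ line of Lemma \ref{averages}). Since all traces of words in $\Omega$ enter only through $(\tr\Omega)^j=1$ and $\tr(\Omega^m)\le1$, the bound is independent of $\Omega\in\mathbbm S_1([0,1]^D)$, as asserted.

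The hard part is precisely the exponent bookkeeping in the middle step: one must prove that the gain a pair $(\sigma,\tau)$ can make in the combinatorial coefficient by creating extra $A$-loops — each worth a full factor $D$ — is always strictly outweighed, with a uniform exponential margin in the Cayley distance, by the Weingarten decay $(dD)^{-|\tau\sigma^{-1}|(1-2/k)}$, so that the factorially many terms still sum to something tending to $0$. This is where the sharper estimate of Theorem \ref{bound} (in place of the crude $\Wg\sim n^{-p-|\sigma|}$) and the hypothesis $D\ge n^5$ are both indispensable, and where the three parts of Lemma \ref{gamma} are used to make the loop-counting and the re-indexing rigorous; the resulting case analysis, long but mechanical once the cycle structure of $\gamma$ is fixed, I would carry out in Appendix \ref{appA}.
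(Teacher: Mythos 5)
Your proposal follows essentially the same route as the paper's own proof in Appendix \ref{appA}: expand $\mathbbm E_U(\tr\rho_l^2)$ by the graphical Weingarten calculus, isolate the single pairing with $\tau\sigma^{-1}=\1$ whose coefficient, averaged via Lemma \ref{averages}, yields $\tfrac14 D^{2n}d^{2n-l}\cdot(Dd)^{-2n}=\tfrac1{4d^l}$, and control all other pairings by combining the bound of Theorem \ref{bound} (with exponent $1-2/k=3/5$, valid because $D\geq n^5$), the change of variables and parity statements of Lemma \ref{gamma}, and a geometric-series count of permutations at given Cayley distance, giving $O(D^{-1/5})$. The only step your sketch compresses relative to the paper is the explicit case split according to whether the two $R$-boundary matrices fall in the same loop or in different loops (the source of the dangerous extra factor of $D$), which is precisely where the parity of Lemma \ref{gamma}(a) and the bijection of Lemma \ref{gamma}(c) are deployed in the paper's bookkeeping.
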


In order to apply the measure concentration phenomenon we only need to compute the Lipschitz constant of the functions we are interested in. The proof of the following theorem will be given in Appendix \ref{appB}.

\begin{theorem}[Lipschitz constants]\label{Lipschitz}
For any $\Omega \in \mathbbm S_1([0,1]^D)$, let $f(U,V,W,\Lambda)= (\trace {\rho_l(U,V,W,\Lambda, \Omega)})^2$ and $g(U,V,W,\Lambda)=\trace \rho^2_l(U,V,W,\Lambda, \Omega)$ where $$\rho_l(U,V,W,\Lambda, \Omega)  = \trace_{A,B_1...B_t, B_{t+l+1}...B_n} (V\Lambda V^\dagger U_{A,B_1}\cdot\cdot\cdot U_{A,B_n} (W \Omega W^\dagger \otimes (\ket 0 \bra 0)^{\otimes n}) U_{A,B_n}^{ \dagger}\cdot\cdot\cdot U_{A,B_1}^{\dagger}).$$
Then the Lipschitz constants of both functions are upper bounded by  $4 n+10$.

\end{theorem}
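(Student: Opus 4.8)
The plan is to realise both $f$ and $g$ as simple scalar functions of the unnormalised operator-valued map $x=(U,V,W,\Lambda)\mapsto\rho_l(x)$, to reduce the two Lipschitz estimates to a single bound on $\|\rho_l(x)-\rho_l(y)\|_1$, and to obtain that bound by telescoping through the factors of
$$\tilde\rho(x)=L\,U_{A,B_1}\cdots U_{A,B_n}\,\big(R\otimes(\ket 0\bra 0)^{\otimes n}\big)\,U_{A,B_n}^\dagger\cdots U_{A,B_1}^\dagger,\qquad L=V\Lambda V^\dagger,\ \ R=W\Omega W^\dagger,$$
of which $\rho_l(x)$ is a partial trace. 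The elementary inputs are: $0\le L$ with $\|L\|_\infty\le 1$; $0\le R$ with $\trace R=\trace\Omega=1$, hence $\|R\otimes(\ket 0\bra 0)^{\otimes n}\|_1=1$; the $U_{A,B_k}$ and their adjoints are unitaries, so multiplication by them leaves $\|\cdot\|_1$ and $\|\cdot\|_\infty$ invariant; partial trace is $\|\cdot\|_1$-nonexpansive and preserves total trace; and the Hölder bounds $\|AB\|_1\le\|A\|_\infty\|B\|_1$, $\|AB\|_1\le\|A\|_1\|B\|_\infty$ together with $\|M\|_\infty\le\|M\|_2$. In particular $\|\rho_l(x)\|_1\le\|\tilde\rho(x)\|_1\le\|L\|_\infty\|R\otimes(\ket0\bra0)^{\otimes n}\|_1\le 1$, so $\|\rho_l(x)\|_\infty\le 1$, while $\trace\rho_l(x)=\trace\tilde\rho(x)\in[0,1]$ since $\tilde\rho$ is a product of the two positive operators $L$ and $U_{A,B_1}\cdots U_{A,B_n}(R\otimes(\ket0\bra0)^{\otimes n})U_{A,B_n}^\dagger\cdots U_{A,B_1}^\dagger$. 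Consequently
$$|g(x)-g(y)|=\big|\trace\big((\rho_l(x)-\rho_l(y))(\rho_l(x)+\rho_l(y))\big)\big|\le\big(\|\rho_l(x)\|_\infty+\|\rho_l(y)\|_\infty\big)\|\rho_l(x)-\rho_l(y)\|_1\le 2\|\rho_l(x)-\rho_l(y)\|_1,$$
$$|f(x)-f(y)|=\big|\trace\rho_l(x)-\trace\rho_l(y)\big|\,\big|\trace\rho_l(x)+\trace\rho_l(y)\big|\le 2\big|\trace(\rho_l(x)-\rho_l(y))\big|\le 2\|\rho_l(x)-\rho_l(y)\|_1,$$
and since the partial trace does not increase $\|\cdot\|_1$ it is enough to prove $\|\tilde\rho(x)-\tilde\rho(y)\|_1\le(2n+5)\,d_1(x,y)$.

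I would establish this coordinate by coordinate, since $d_1$ is the $l^1$ product metric and the four contributions therefore add. For the $U$-coordinate, telescope over the $2n$ occurrences $U_{A,B_1},\dots,U_{A,B_n},U_{A,B_n}^\dagger,\dots,U_{A,B_1}^\dagger$, replacing them one at a time by the operators built from $U'$; one replacement changes $\tilde\rho$ by $A\big((U-U')\otimes\1\big)B$ or its adjoint variant, with $A,B$ products of the remaining (already unitary) factors. Measuring the block containing $R\otimes(\ket0\bra0)^{\otimes n}$ in $\|\cdot\|_1$ (value $\le 1$) and everything else -- the factor $L$ and the surviving unitaries -- in $\|\cdot\|_\infty$ (value $\le 1$), each replacement contributes at most $\|U-U'\|_\infty\le d_2(U,U')$, so the $U$-contribution is $2n\,d_2(U,U')$. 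For the $V$-coordinate only the factor $L=V\Lambda V^\dagger$ changes, and I telescope again inside it via $V\Lambda V^\dagger-V'\Lambda V'^\dagger=(V-V')\Lambda V^\dagger+V'\Lambda(V-V')^\dagger$; with $\|\Lambda\|_\infty\le 1$ and the same Hölder split each of the two terms contributes $\|V-V'\|_2$, giving $2\,d_2(V,V')$. The $W$-coordinate is identical with $\|\Omega\|_\infty\le 1$, giving $2\,d_2(W,W')$, and changing only $\Lambda$ gives $\|V(\Lambda-\Lambda')V^\dagger\|_\infty\,\|R\otimes(\ket0\bra0)^{\otimes n}\|_1=\|\Lambda-\Lambda'\|_\infty=d_\infty(\Lambda,\Lambda')$. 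Adding the four estimates and bounding each of $2n,2,2,1$ by $2n+5$ gives $\|\tilde\rho(x)-\tilde\rho(y)\|_1\le(2n+5)\,d_1(x,y)$, hence $\|f\|_{Lip},\|g\|_{Lip}\le 4n+10$.

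The argument is essentially bookkeeping, and the one place that needs care is precisely the choice of Hölder pairing at each telescoping step: one must always put the unique factor block containing $R\otimes(\ket0\bra0)^{\otimes n}$ into $\|\cdot\|_1$ and all other blocks into $\|\cdot\|_\infty$, so that every frozen block contributes exactly $1$ -- this is where the normalisations $\trace R=1$ and $\|L\|_\infty\le 1$ are used -- and one must not overlook that $V$ (resp. $W$) appears twice in $\tilde\rho$, once as $V$ and once as $V^\dagger$, which is the origin of the factor $2$ in those two coordinates. Everything else reduces to the facts that $\|\cdot\|_\infty\le\|\cdot\|_2$ on matrices, that $\|\cdot\|_1$ and $\|\cdot\|_\infty$ are invariant under left/right multiplication by unitaries, and that the partial trace is a $\|\cdot\|_1$-contraction.
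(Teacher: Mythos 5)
Your proof is correct and follows essentially the same route as the paper: reduce both functions to $\|\rho_l(x)-\rho_l(y)\|_1$ (with a factor $2$ from the outer square), telescope over the $2n+5$ factors of the unnormalised operator, and apply H\"older with the trace-one block in $\|\cdot\|_1$, all other blocks in $\|\cdot\|_\infty$, together with $\|\cdot\|_\infty\le\|\cdot\|_2$. The only cosmetic point is in the $W$-step, where the relevant normalisation is $\|\Omega\|_1=1$ (or $\|\Omega\|_2\le 1$ via Cauchy--Schwarz) rather than $\|\Omega\|_\infty\le 1$, exactly as in the paper's displayed H\"older bound.
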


Now we can show which is the behavior of the $2$-Renyi entropy, or equivalently the purity of the normalized state $\rho_{Norm}= \rho_l/\tr(\rho_l)$. 
\begin{theorem} \label{main}
Let $\rho_l$ be taken at random from the ensemble introduced with $D\geq n^5$. Then
$\tr(\rho_{Norm}^2)=\tr \rho^2_l /(\tr \rho_l)^2=1/d^l+O(D^{-1/5})$ except\hspace{1mm} with probability exponentially small in D.
\end{theorem}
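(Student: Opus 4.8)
The plan is to control the numerator and the denominator of $\trace(\rho_{Norm}^2)=\trace\rho_l^2/(\trace\rho_l)^2$ separately, feeding in the mean computations of Theorems~\ref{rho} and~\ref{rhosquare}, the Lipschitz bounds of Theorem~\ref{Lipschitz}, and the concentration estimate of the previous section. Write $P=\trace\rho_l^2$ and $Q=(\trace\rho_l)^2$, so that $\trace(\rho_{Norm}^2)=P/Q$; throughout, $\Omega\in\mathbbm S_1([0,1]^D)$ is arbitrary and fixed, the randomness being over $(U,V,W,\Lambda)$, and the point (cf.\ the Remark of the previous section and Theorems~\ref{rho}, \ref{rhosquare}, \ref{Lipschitz}) is that every estimate used is uniform in $\Omega$.

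First I would isolate the one inequality that needs no probabilistic input. Since $\rho_l$ is supported on a space of dimension $d^l$, Cauchy--Schwarz in the trace inner product gives $(\trace\rho_l)^2\le d^l\,\trace\rho_l^2$, i.e.\ $P\ge Q/d^l$ pointwise on the parameter space. Hence $\trace(\rho_{Norm}^2)\ge 1/d^l$ always — the value attained by the maximally mixed state — and the content of the theorem is the matching upper bound $\trace(\rho_{Norm}^2)\le 1/d^l+O(D^{-1/5})$, which is where concentration is used.

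For the upper bound I would argue as follows. By Theorem~\ref{rho}, $\mathbbm E(\trace\rho_l)=1/2$, so Jensen gives $\mathbbm E(Q)\ge 1/4$; and by Theorem~\ref{Lipschitz}, $Q$ is $(4n+10)$-Lipschitz on $X=U(dD)\times U(D)\times U(D)\times[0,1]^D$, so Theorem~\ref{phenomenon} together with the concentration Lemma of the previous section (dimension parameter $D$) gives $Q\ge 1/4-r$ outside an event of probability at most $2c\,e^{-CDr^2/(4n+10)^2}$. Likewise, by Theorem~\ref{rhosquare}, $\mathbbm E(P)\le\frac1{4d^l}+O(D^{-1/5})$, and $P$ is also $(4n+10)$-Lipschitz, so $P\le\frac1{4d^l}+O(D^{-1/5})+r$ outside an event of the same size. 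Taking $r\asymp D^{-1/5}$ and using $D\ge n^5$ (so $(4n+10)^2=O(D^{2/5})$ and the exponent is $\Omega(D^{1/5})$), both bad events are avoided except with probability at most $c'e^{-c''D^{1/5}}$, which is exponentially small in $D$; and on the complementary event
\[
\trace(\rho_{Norm}^2)=\frac{P}{Q}\le\frac{\tfrac1{4d^l}+O(D^{-1/5})}{\tfrac14-O(D^{-1/5})}=\frac1{d^l}+O(D^{-1/5}),
\]
where in the last step $d^l$ (with $l$ fixed, $l\ll n$) is absorbed into the constant. Together with the deterministic lower bound above, this is exactly the statement of Theorem~\ref{main}.

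I do not expect a genuine obstacle at this last stage, because its two substantial ingredients — the Weingarten estimate of $\mathbbm E(\trace\rho_l^2)$ (Theorem~\ref{rhosquare}) and the Lipschitz bounds (Theorem~\ref{Lipschitz}) — are already supplied by the appendices; the only routine care required is the $\varepsilon$-bookkeeping in the displayed division and verifying that the concentration exponent is a fixed positive power of $D$. The one observation worth highlighting is structural rather than computational: one is tempted to evaluate $\mathbbm E\big((\trace\rho_l)^2\big)$ by yet another Weingarten calculation, but this is unnecessary — Jensen bounds the denominator from below and Cauchy--Schwarz disposes of the other direction with no probabilistic cost at all. (Theorem~\ref{thetheorem} then follows by turning this purity bound into an operator-norm bound through $\|\rho_{Norm}-\1/d^l\|_\infty\le\|\rho_{Norm}-\1/d^l\|_2=(\trace\rho_{Norm}^2-1/d^l)^{1/2}$.)
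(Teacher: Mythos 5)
Your proposal is correct and follows essentially the same route as the paper: concentration of $\trace\rho_l^2$ and $(\trace\rho_l)^2$ around their means via the Lipschitz bounds and the concentration lemma, Jensen's inequality to replace $\mathbbm E((\trace\rho_l)^2)$ by $(\mathbbm E\trace\rho_l)^2=1/4$, Theorems \ref{rho} and \ref{rhosquare} for the means, a union bound, and the choice $\epsilon\asymp D^{-1/5}$. Your only additions — making the trivial lower bound $\trace(\rho_{Norm}^2)\ge 1/d^l$ explicit via Cauchy--Schwarz, and noting how Theorem \ref{thetheorem} follows via the $2$-norm — are welcome clarifications but do not change the argument.
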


\begin{proof}

Putting together the measure concentration phenomenon \ref{phenomenon}, the bounds on the Lipschitz constant \ref{Lipschitz} and the union bound, we have for all $\Omega \in \mathbbm S_1([0,1]^D)$ that except with probability $c_1 e^{-c_2 \epsilon^2 D/n^2}$
$$\trace (\rho^2) \leq \E(\trace(\rho^2))+\epsilon \text{\hspace{1cm} and \hspace{1cm}} (\trace \rho)^2 \geq \E((\trace \rho)^2)-\epsilon$$
both at the same time where $c_1$ and $c_2$ are universal constants. Thus, we can bound
$$\frac {\trace(\rho^2)} {(\trace\rho)^2} \leq \frac {\E(\trace(\rho^2))+\epsilon} {\E((\trace\rho)^2)-\epsilon}\leq \frac {\E(\trace(\rho^2))+\epsilon} {(\E(\trace\rho))^2-\epsilon} \leq \frac{ \frac 1 {4 d^{l}} + O(D^{-1/5})+\epsilon}{1/4 -\epsilon } =\frac 1 {d^l}+ O(D^{-1/5}),$$
where the second inequality follows from Jensen's inequality, the third inequality follows from Theorem  \ref{rho} and \ref{rhosquare} and in the last equality we have used that we can take $\epsilon=O(D^{-1/5})$. The result follows.
\end{proof}

Finally we can easily prove our main theorem, which bounds the distance between the reduced density matrix of a generic random MPS and the completely mixed state.

\begin{proof} [Proof of theorem \ref{thetheorem}]\

$\rho_{Norm}=\rho_l/\trace \rho_l$ is trace normalized, that is, its eigenvalues sum up to one. Thus, in order to have an eigenvalue of $\rho$ as far as possible from $1/d^l$, the distribution of eigenvalues optimizing this problem is the one that has one eigenvalue as small or big as possible and the rest all equal. In both cases the distance between this eigenvalue and $1/d^l$ is $(d^l-1)\sqrt{d^l}O(D^{-1/10})$.
\end{proof}
\end{section}

\begin{section}{Conclusions}
In this work we have shown how reduced density matrices of  small subsystems of translational invariant random MPS have generically maximum entropy. 
This can be read as recovering Jayne's principle of maximum entropy in the situation where the prior information to incorporate in the sampling procedure is the locality and homogeneity of the interactions. For that we have relied on the (well justified) fact that MPS are the right representation for ground states of one dimensional local Hamiltonians and in the natural way of sampling MPS based on the symmetry principle.

We acknowledge Ion Nechita for very insightful discussions during the preparation of this manuscript and Mittag-Leffler institute for the organization of the 2010 fall in Quantum Information Theory where this work was initiated. C.G.G. and D.P.G.'s research was supported by EU grant QUEVADIS and Spanish projects QUITEMAD and MTM2011-26912. B.C.'s research was supported by ANR GranMa, ANR Galoisint and NSERC grant RGPIN/341303-2007.

\end{section}

\bibliography{bibRMPS}

\appendix

\begin{section}{Proof of Theorem \ref{rhosquare}}\label{appA}

\begin{figure}[t]
\includegraphics[width=15cm]{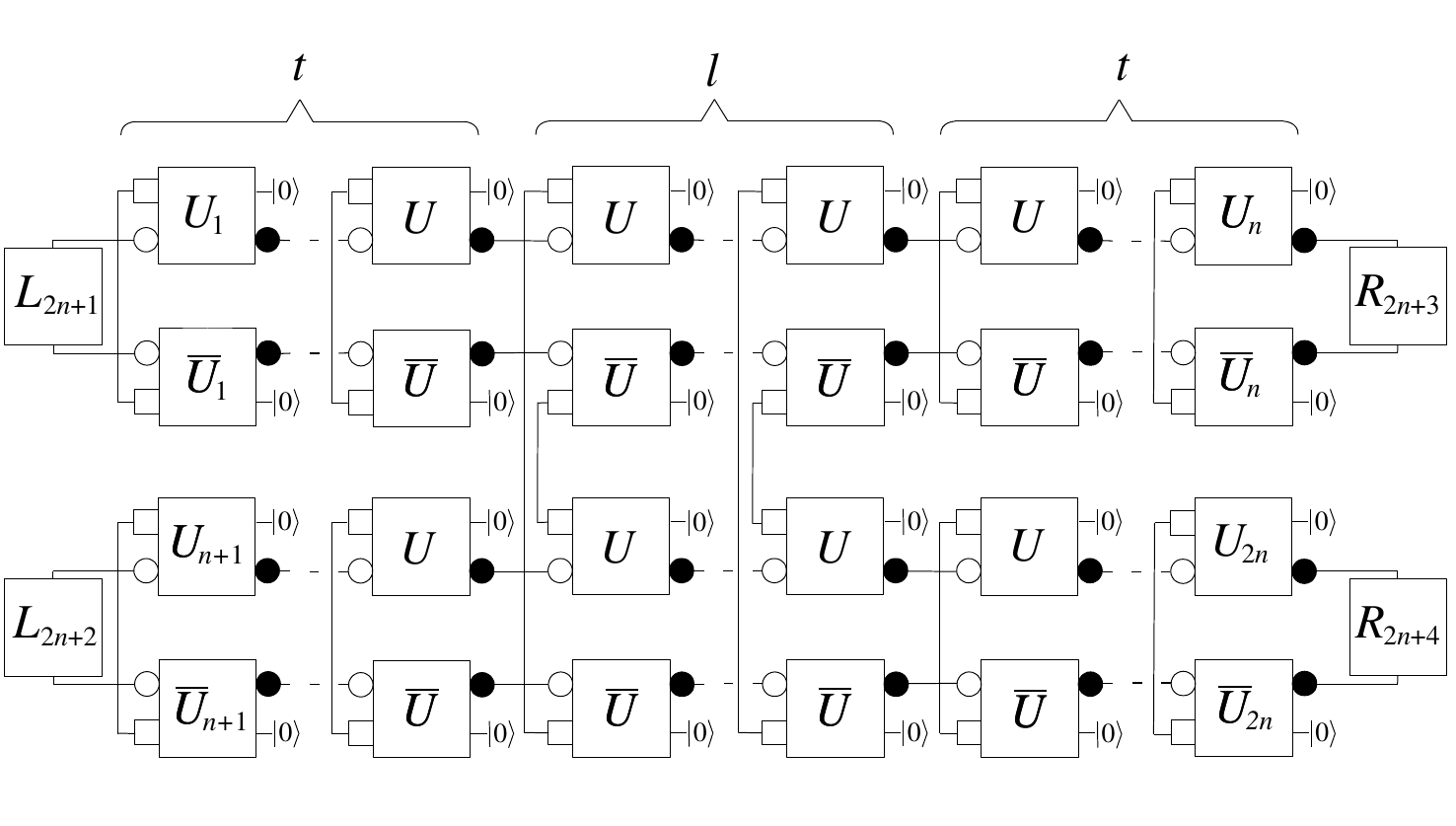}\\
\caption{Computation of $\mathbbm E_U (\tr \rho^2_l(L,R,U))$\hspace{0,1cm} using Weingarten graphical calculus: for any two permutations $\alpha, \beta \in S_{2n}$ delete the $U$ and $\overline U$ matrices, join the white circle of $U_i$ with the white circle of $U_{\alpha(i)}$, join the black circle of $U_i$ with the black circle of $U_{\beta(i)}$. The number $C_{(\alpha, \beta)}$ is the product of the traces involved in the new picture.}\label{figWein2}
\end{figure}

We can take first the average in U and using equation \ref{Graph} together with the bound of theorem \ref{bound} we have

\begin{equation*}
\begin{aligned}
\mathbbm E_{L,R,U} [\tr \rho^2_l(L,R,U)] &= \sum_{\alpha, \beta \in S_l} \mathbbm E_{L,R} [C_{(\alpha, \beta)}]\Wg(Dd,\beta\alpha ^{-1})=\\
&= \mathbbm E_{L,R} [C_{(\1, \1)}]\Wg(Dd,\1)+ \sum_{(\alpha, \beta)\neq(\1,\1) \in S_l } \mathbbm E_{L,R} [C_{(\alpha, \beta)}]\Wg(Dd,\beta\alpha ^{-1})\leq\\
&\leq \mathbbm (Dd)^{-2n}\mathbbm E_{L,R} [C_{(\1, \1)}]+K\sum_{(\alpha, \beta)\neq(\1,\1) \in S_l}  \mathbbm E_{L,R} [C_{(\alpha, \beta)}](Dd)^{-2n-3/5|\beta\alpha ^{-1}|},\\
\end{aligned}
\end{equation*}
where we are separating the case where $\alpha=\beta=\1$ and using the known value of the Weingarten function $\Wg(Dd,\1)=\mathbbm (Dd)^{-2n}$. The reason to do so is that this term is the largest one in the sum (as it will become clear through the proof).

In order to compute the coefficients $C_{(\alpha, \beta)}$ we apply the graphical Weingarten Calculus to figure \ref{figWein2}. That is, given a permutation $\alpha$ that links the white squares and circles of the $U's$ with the white squares and circles in $\overline U's$ and a permutation $\beta$ that links the black circles of the $U's$ with the black circles in $\overline U's$. We numerate the $U$ matrices from left to right and top to bottom and the same for the $\overline U$ matrices. Moreover, we enumerate the matrices $L$ as $2n+1$ and $2n+2$ and the matrices $R$ as $2n+3$ and $2n+4$.

Now the links (from left to right) between the circles of the matrices $U's$, $L$ and $R$ are given by the function $\gamma=\left(\begin{array}{ccccc}2n+1 & 1 & 2 &... & n \\1 & 2 & 3 & ... & 2n+3\end{array}\right)\left(\begin{array}{ccccc}2n+2 & n+1 & n+2 & ... & 2n \\ n+1& n+2 & n+3 & ... & 2n+4 \end{array}\right)$. One can add two extra (non-existing) links $\gamma(2n+3)=2n+1$ and $\gamma(2n+4)=2n+2$; that way $\gamma$ is a permutation. Analogously, we have the same permutation $\gamma$ for the $\overline U$ matrices. The permutation relating the links of the squares of $U$ and $\overline U$ is $\tau=(t+1,n+t+1)(t+2,n+t+1)...(t+l,n+t+l)$.
Besides, define $\alpha'= \alpha(2n+1)(2n+2)(2n+3)(2n+4)$ and $\beta'= \beta (2n+1) (2n+2)(2n+3)(2n+4)$ as the permutation $\alpha$, $\beta$ but considering it as an element of $S_{2n+4}$.

The number of loops relating the circles is  $\#  \gamma^{-1} \alpha' \gamma \beta'^{-1}-2=2n+2-|\gamma^{-1} \alpha' \gamma \beta'^{-1}|$, taking into account those where $L$ or $R$ appears. Note that we are subtracting $2$ loops $(2n+1)(2n+2)$ that we have added when including the (non-existing) links of the permutation $\gamma$. The number of loops relating the squares is $\# \tau \alpha=n-|\tau\alpha|$.
All the loops are trivial and thus they correspond to the dimension of the system, except those where L or R appears, in which we will take averages. For $\alpha=\beta=\1$, we have that $$\mathbbm E_{L,R} [C_{(\1, \1)}]=\mathbbm E_{L,R} [\tr(L)^2 \tr(R)^2 D^{2n-2} d^{2n-l}]=1/4D^{2n}d^{2n-l}.$$ 

Taking averages in $L$ and $R$ and using the bounds from lemma \ref{averages}, it can be shown, by inspection on all possible combinations of $L$ and $R$ in different loops, that it is enough to distinguish the following two cases:

\begin{enumerate}
\item $\alpha,\beta\in \mathcal{A}_{2n} $, where $\mathcal{A}_{2n}$ is the set of tuples where $2n+3$ and $2n+4$ are not in the same cycle of the permutation $ \gamma^{-1} \alpha' \gamma \alpha'^{-1} \beta$, that is, both $R$ matrices appear in different loops. In this case we have that 
$$\mathbbm E_{L,R}[C_{(\alpha, \beta)}]\leq 1/4D^{2n-| \gamma^{-1} \alpha' \gamma \alpha'^{-1} \beta|} d^{2n- |\tau \alpha|}\leq D^{2n-| \gamma^{-1} \alpha' \gamma \alpha'^{-1} \beta|} d^{2n}.$$ 
Making the change of variables $h =\beta \alpha^{-1}$, $g= \gamma^{-1} \alpha' \gamma \alpha'^{-1}$ that is proven to be one to one in lemma \ref{gamma}, and denoting $h'= h(2n+1)(2n+2)(2n+3)(2n+4)$ we get that
$$\sum_{(\alpha, \beta) \in A_{2n} }  \mathbbm E_{L,R}[C_{(\alpha, \beta)}]\Wg(Dd,\beta \alpha^{-1}) \leq K \sum_{(g, h) \in S_{2n+4}\times S_{2n}} D^{-|  gh'^{-1}|-(3/5)|h|} d^{-(3/5)| h|}=$$
$$=  K \sum_{g\neq \1 \in S_{2n+4}}  D^{-|g|}+K \sum_{h \neq \1, g \in S_{2n+4}}  D^{-|gh|-3/5|h|}  \leq$$
$$\leq K\left(   \sum_{|g|=1}^{2n+3} (\frac {(2n+4)(2n+3)}  {2D})^{|g|}+\sum_{|g|=0}^{2n+3} (\frac {(2n+4)(2n+3)} {2D})^{|g|}\sum_{|h|=1}^{2n-1}(\frac {2n(2n-1)}{2 D^{3/5}})^{|h|} \right) \leq$$
$$\leq K\left(  \frac {(2n+4)(2n+3)}  {2D-(2n+4)(2n+3)}  + \frac {2D}   {2D-(2n+4)(2n+3)}\frac {2n(2n-1)}{2 D^{3/5} - 2n(2n-1)}\right)$$
In the first inequality  we just upper bound the number of permutations with a given number of transpositions, the second is just a geometric sum. As $D \geq n^5$ we get further
$$\sum_{(\alpha, \beta) \in A_{2n} \backslash \{(\1,\1)\} }  \mathbbm E_{L,R}[C_{(\alpha, \beta)}]\Wg(Dd,\beta \alpha^{-1}) \leq  O(D^{-1/5})$$

\item $\alpha,\beta\in \mathcal{B}_{2n} $, where $\mathcal{B}_n$ is the set of tuples where $2n+3$ and $2n+4$ are in the same cycle of the permutation $ \gamma^{-1} \alpha' \gamma \alpha'^{-1} \beta$, that is, both $R$ matrices appear in the same loop. In this case we have that 
$$\mathbbm E_{L,R}[C_{(\alpha ,\beta)}]\leq 1/4D^{2n+1-| \gamma^{-1} \alpha' \gamma \alpha'^{-1} \beta|} d^{2n- |\tau \alpha|}\leq D^{2n+1-| \gamma^{-1} \alpha' \gamma \alpha'^{-1} \beta|} d^{2n}.$$
Applying the same change of variables and consider $\mathcal B'_{2n}$ the image of $\mathcal B_{2n}$ under the change of variables, we get 

$$\sum_{(\alpha, \beta) \in \mathcal B_{2n} }  \mathbbm E_{L,R}[C_{(\alpha, \beta)}]\Wg(Dd,\beta \alpha^{-1}) \leq K \sum_{(g, h) \in \mathcal B'_{2n}} D^{1-|  gh'^{-1}|-(3/5)|h|} d^{-(3/5)| h|}.$$

In order to bound this sum one has to proceed more carefully. The proof follows by bounding independently over the different cases where: $h=\1$, $h=(2n+3,2n+4)$, $h$ is a different transposition, and the rest of terms. For all these cases one has to take into account the properties of the elements in $\mathcal B'_{2n}$, that is, $2n+3$ and $2n+4$ belongs to the same cycle of $gh'^{-1}$ and the parity of $|gh'^{-1}|+|h|$ that is proven in lemma \ref{gamma}. Following this procedure one can prove that 
$$\sum_{(\alpha, \beta) \in \mathcal B_{2n} }  \mathbbm E_{L,R}[C_{(\alpha, \beta)}]\Wg(Dd,\beta \alpha^{-1}) \leq O(D^{-1/5})$$

\end{enumerate}

The result follows joining the two cases and the case $\alpha=\beta=\1$ 

\end{section}

\begin{section}{Proof of Theorem \ref{Lipschitz}}\label{appB}

We use the notation $\| \cdot\|_p$ for the Schatten p-norm. For the first function we have

\begin{equation*}
\begin{aligned}
\|f\|_{Lip}&=\frac {|f(U,V,W,\Lambda, \Omega)-f(U',V',W',\Lambda', \Omega)|} {d((U,V,W,\Lambda),(U',V',W',\Lambda'))} \leq  \frac { |(\trace \rho_l(U,V,W,\Lambda, \Omega))^2-(\trace \rho_l(U',V',W',\Lambda', \Omega))^2|}{d((U,V,W,\Lambda),(U',V',W',\Lambda'))} =\\
&= \frac { |(\trace \rho_l(U,V,W,\Lambda, \Omega)-\trace \rho_l(U',V',W',\Lambda', \Omega))(\trace \rho_l(U,V,W,\Lambda, \Omega)+\trace \rho_l(U',V',W',\Lambda', \Omega))|} {d((U,V,W,\Lambda),(U',V',W',\Lambda'))}\leq\\
& \leq \frac { 2|\trace (\rho_l(U,V,W,\Lambda, \Omega)-\rho_l(U',V',W',\Lambda', \Omega))|} {d((U,V,W,\Lambda),(U',V',W',\Lambda'))}  \leq \frac { 2\trace |\rho_l(U,V,W,\Lambda, \Omega)-\rho_l(U',V',W',\Lambda', \Omega)|} {d((U,V,W,\Lambda),(U',V',W',\Lambda'))}\leq\\
& \leq \frac {2\| V\Lambda V^\dagger U^{ n} (W \Omega W^\dagger \otimes (\ket 0 \bra 0)^{\otimes n}) (U^{ \dagger})^{ n}- V'\Lambda' V'^\dagger U'^{ n} (W' \Omega  W'^\dagger \otimes (\ket 0 \bra 0)^{\otimes n}) (U'^{ \dagger})^{n}\|_1} {{\|U-U'\|_2}+{\|V-V'\|_2}+{\|W-W'\|_2}+{\|\Lambda-\Lambda'\|_\infty}},
\end{aligned}
\end{equation*}
where we are using standard inequalities and for the shake of simplicity we are denoting by the same $U$ unitaries that are acting on different systems and with the same letter $V$ a unitary and its tensor product with the identity. Adding and substracting terms and applying the triangular inequality we get 2n+5 terms of the following form

$$ \frac {2\| V^*\Lambda^* V^{\dagger*} U^{* n} (W^* \Omega W^{*\dagger} \otimes (\ket 0 \bra 0)^{\otimes n}) (U^{* \dagger})^{ n} \|_1} {{\|U-U'\|_2}+{\|V-V'\|_2}+{\|W-W'\|_2}+{\|\Lambda-\Lambda'\|_\infty}} ,$$
where any $X^*$ stands for $X$, $X'$ or $X-X'$ and in any term the latter only appears once. Then

$$\|f\|_{Lip} \leq  \frac {(4n+10)\| V^*\Lambda^* V^{\dagger*} U^{*n} (W^* \Omega W^{*\dagger} \otimes (\ket 0 \bra 0)^{\otimes n}) (U^{* \dagger})^{ n} \|_1} {{\|U-U'\|_2}+{\|V-V'\|_2}+{\|W-W'\|_2}+{\|\Lambda-\Lambda'\|_\infty}}. $$
Applying the inequality $\|XY\|_1\leq \|X\|_1\|Y\|_\infty$  we get
$$\|f\|_{Lip} \leq (4n+10) \frac {\| V^*\|_\infty \|\Lambda^*\|_\infty \|V^{\dagger*}\|_\infty \|U^{* n}\|_\infty  \|W^*\|_\infty \|\Omega\|_1 \|W^{*\dagger}\|_\infty \|(U^{* \dagger})^{ n} \|_\infty} {{\|U-U'\|_2}+{\|V-V'\|_2}+{\|W-W'\|_2}+{\|\Lambda-\Lambda'\|_\infty}}.$$
Now, by the decomposition we have done, any term has only one norm in the numerator of the form of the ones in the denominator. The other norms in the numerator are trivially bounded by one. Thus we get

$$\|f\|_{Lip} \leq 4n+10$$

For the second function we have

\begin{equation*}
\begin{aligned}
\|g\|_{Lip}&=\frac {|g(U,V,W,\Lambda, \Omega)-g(U',V',W',\Lambda', \Omega)|} {d((U,V,W,\Lambda),(U',V',W',\Lambda'))}\leq \\
&\leq  \frac { |\trace (\rho^2_l(U,V,W,\Lambda, \Omega)-\rho^2_l(U',V',W',\Lambda', \Omega))|} {{\|U-U'\|_2}+{\|V-V'\|_2}+{\|W-W'\|_2}+{\|\Lambda-\Lambda'\|_\infty}} \leq 4n+10
\end{aligned}
\end{equation*}
where the result follows using the same techniques.

\end{section}

\end{document}